\newtheorem{lemma}{Lemma}
\newtheorem{theorem}{Theorem}
\newtheorem{corollary}{Corollary}
\theoremstyle{definition}
\DeclareMathOperator{\tr}{Tr}
\DeclareMathOperator{\diag}{diag}
\newcommand{\dd}{\operatorname{d}\!}
\newcommand{\norm}[1]{\lVert{#1}\rVert}
\newcommand{\inv}{^{-1}}
\renewcommand{\v}{{\bf v}}
\newcommand{\x}{{\bf x}}
\newcommand{\y}{{\bf y}}
\newcommand{\A}{{\bf A}}
\newcommand{\F}{{\bf F}}
\newcommand{\G}{{\bf G}}
\newcommand{\I}{{\bf I}}
\newcommand{\M}{{\bf M}}
\newcommand{\Q}{{\bf Q}}
\newcommand{\R}{\mathbb{R}}
\renewcommand{\S}{{\bf S}}
\newcommand{\U}{{\bf U}}
\newcommand{\V}{{\bf V}}
\newcommand{\W}{{\bf W}}
\newcommand{\calD}{\mathcal{D}}
\newcommand{\calE}{\mathcal{E}}
\newcommand{\calN}{\mathcal{N}}
\newcommand{\calO}{\mathcal{O}}
\newcommand{\calS}{\mathcal{S}}
\newcommand{\calU}{\mathcal{U}}
\newcommand{\calZ}{\mathcal{Z}}
\title{Global stability of a Hebbian/anti-Hebbian network for principal subspace learning}
\author{David~Lipshutz\thanks{Department of Neuroscience, Baylor College of Medicine and Neuroengineering Initiative, Rice University.}\; and Robert J.~Lipshutz}
\date{\today}
\begin{document}

\maketitle

\begin{abstract}
Biological neural networks self-organize according to local synaptic modifications to produce stable computations. 
How modifications at the synaptic level give rise to such computations at the network level remains an open question.
\citet{pehlevan2015hebbian} proposed a model of a self-organizing neural network with Hebbian and anti-Hebbian synaptic updates that implements an algorithm for principal subspace analysis; however, global stability of the nonlinear synaptic dynamics has not been established.
Here, for the case that the feedforward and recurrent weights evolve at the same timescale, we prove global stability of the continuum limit of the synaptic dynamics and show that the dynamics evolve in two phases. 
In the first phase, the synaptic weights converge to an invariant manifold where the `neural filters' are orthonormal. 
In the second phase, the synaptic dynamics follow the gradient flow of a non-convex potential function whose minima correspond to neural filters that span the principal subspace of the input data.
\end{abstract}

\section{Introduction}

Biological neural networks self-organize according to local synaptic interactions that produce stable computations at the network-level. 
A challenge in theoretical neuroscience is to link these local interactions to stable network-level computations.

In a seminal work, \citet{oja1982simplified} proposed a computational model of a neuron as implementing an online algorithm for learning the top principal component of its input data using local, Hebbian synaptic updates.
Oja's algorithm thus establishes a link between local, Hebbian interactions at the synaptic level and principal components analysis (PCA) at the neuron level.
Moreover, the online (stochastic) algorithm is strikingly stable and data efficient---it is globally stable with a convergence rate that matches the information theoretic lower bound \citep{chou2020ode}---suggesting that neural networks with local synaptic updates may also be relevant in machine learning and neuromorphic computing applications.

In the multi-channel setting, much less is known about the stability of neural networks with local synaptic updates.
Following Oja's work \citep{oja1982simplified}, there were several extensions to principal subspace analysis (PSA) algorithms that can be implemented in neural networks with local Hebbian and anti-Hebbian synaptic updates \citep{foldiak1989adaptive,rubner1989self,rubner1990development,leen1991learning,pehlevan2015hebbian,pehlevan2018similarity}; however, these multi-channel networks include recurrent interactions that complicate their analyses and global stability of the networks' dynamics has not been established.
In another line of work, networks with \textit{non-local} synaptic updates for PSA have been proposed and analyzed \citep{oja1983analysis,sanger1989optimal,leen1990hebbian,oja1985stochastic,hornik1992convergence,oja1992principal,plumbley1995lyapunov}, but these networks do not explain how local interactions give rise to stable network level computations.

\begin{figure}
    \centering
    \includegraphics{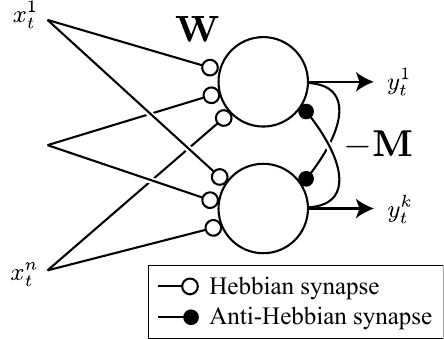}
    \caption{Hebbian/anti-Hebbian network for PSA. Single layer network with $k$ neurons that receives $n$ inputs. Feedforward Hebbian synapses $\W$ connect the $n$ inputs to the $k$ neurons and recurrent anti-Hebbian synapses $-\M$ connect the $k$ neurons.}
    \label{fig:network}
\end{figure}

The focus of this work is to analyze the global synaptic dynamics of a multi-channel network for principal subspace analysis with Hebbian feedforward synapses and anti-Hebbian recurrent synapses introduced by \citet{pehlevan2015hebbian,pehlevan2018similarity} and depicted in Figure \ref{fig:network}. Specifically, we show that the ordinary differential equation (ODE) associated with the synaptic dynamics is globally stable in the sense that for almost every initialization, the synaptic weights converge to an optimal configuration associated with PSA. We further show that the dynamics evolve in two phases. In the first phase, the synaptic weights converge to an invariant manifold characterized by orthonormality of the neural filters. In the second phase, the synaptic dynamics follow the gradient flow of non-convex potential function.

\section{Hebbian/anti-Hebbian network model}

In this section, we first review the Hebbian/anti-Hebbian neural network model for PSA proposed by \citet{pehlevan2015hebbian}.
A detailed derivation of the network from a so-called ``similarity matching'' objective can be found in \citep{pehlevan2015hebbian,pehlevan2018similarity}.
We then introduce the ODE associated with the synaptic dynamics, which will be the focus of our analysis in this work.

Consider a network with $k$ primary neurons that receives $n>k$ inputs, as illustrated in Figure \ref{fig:network}. 
Feedforward synapses $\W$ connect the inputs to primary neurons, and recurrent synapses $-\M$ connect the primary neurons. 
At each timestep $t=1,2,\dots$, the network receives an input vector $\x_t=(x_t^1,\dots,x_t^n)$.
The network operates on two timescales: a fast neural dynamics timescale and a slow synaptic update timescale.

In the first phase, the neural activities of the $k$ output neurons, which are denoted by the $k$-dimensional vector $\y_t=(y_t^1,\dots,y_t^k)$, evolve according to fast linear neural dynamics
\begin{align*}
    \dot\y_t(\gamma)=\W_t\x_t-\M_t\y_t(\gamma),
\end{align*}
which converge to $\y_t=\F_t\x_t$, where $\W_t$ and $\M_t$ respectively denote the states of the feedforward and recurrent synaptic weights at time $t$, and the row vectors of the $k\times n$ matrix $\F_t:=\M_t^{-1}\W_t$ are referred to as the `neural filters'. After the neural activities converge, the synaptic weights are updated according to the slow synaptic plasticity rules
\begin{align}
    \W_{t+1}&=\W_t+2\eta(\y_t\x_t^\top-\W_t) \label{eq:DW}\\ 
    \M_{t+1}&=\M_t+\frac\eta\tau(\y_t\y_t^\top-\M_t)\label{eq:DM},
\end{align}
where $\eta>0$ is the learning rate for the feedforward synapses $\W$ and $\tau$ denotes the ratio between the learning rates for the feedforward synapses $\W$ and the lateral synapses $-\M$. The plasticity rule for the feedforward synapses $\W$ includes a term proportional to the product of the pre- and postsynaptic activities, $\y_t\x_t^\top$, so it is referred to as `Hebbian'. The plasticity rule for the lateral synapses $-\M$ includes a term inversely proportional to the product of the pre- and postsynaptic activities, $\y_t\y_t^\top$, so it is referred to as `anti-Hebbian'.

To analyze the stability of their algorithm, \citet{pehlevan2018similarity} considered the continuum limit of the updates. Formally, when the input data $\{\x_t\}$ are independent and identically distributed samples with zero mean and fixed $n\times n$ covariance matrix $\A$ and the step size $\eta>0$ is infinitesimally small, the synaptic dynamics can be approximated by the ODE
\begin{align}
    \label{eq:dW}
    \frac12\frac{\dd \W(t)}{\dd t}&=\M(t)^{-1}\W(t) \A -\W(t)\\
    \label{eq:dM}
    \tau\frac{\dd \M(t)}{\dd t}&=\M(t)^{-1}\W(t) \A \W(t)^\top \M(t)^{-1}-\M(t).
\end{align}
The relationship between the online algorithm and the ODE can be made precise for certain time-dependent learning rates under appropriate regularity conditions (e.g., the spectrum of $\M$ is uniformly bounded away from zero) \citep{borkar1997stochastic,kushner2012stochastic}. 

\citet{pehlevan2018similarity} proved that every equilibrium point of the ODE corresponds to an eigen-subspace of $\A$, and, when $\tau\le\frac12$, all linearly stable equilibrium points correspond to the \textit{principal} (eigen-)subspace of $\A$. While their theoretical analysis is informative about the synaptic dynamics near the equilibrium points, it is not informative about the synaptic dynamics away from the equilibrium points, which corresponds to most random initializations. \citet{pehlevan2015hebbian,pehlevan2018similarity} provide empirical evidence that both the online algorithm and ODE are globally stable, and they benchmark both against comparable algorithms; however, a theoretical analysis of the global dynamics remains an open problem. The focus of this work is to prove global stability of the ODE.

The ODE \eqref{eq:dW}--\eqref{eq:dM} is naturally viewed as the gradient descent-ascent flow (with timescale separation $\tau$) for solving the nonconvex-concave minimax problem
\begin{align}\label{eq:minimax}
    \min_\W\max_\M f(\W,\M),&&f(\W,\M):=\tr\left(-\M^{-1}\W\A\W^\top+\W\W^\top-\frac12\M^2\right).
\end{align}
where the minimization is over the set of $k\times n$ matrices $\R^{k\times n}$ and the maximization is over the set of $k\times k$ positive definite matrices $\calS_{++}^k$. Analogously, the discrete algorithm \eqref{eq:DW}--\eqref{eq:DM} is naturally interpreted as a stochastic gradient descent-ascent algorithm for solving the minimax problem. In general, proving convergence of gradient descent-ascent algorithms for nonconvex-concave minimax problems is challenging and existing results \citep{borkar1997stochastic,lin2020gradient,centorrino2025similarity} rely on a separation of time-scales (i.e., letting $\tau\to0$).\footnote{During the preparation of this manuscript, we became aware of the preprint \citep{centorrino2025similarity} that also analyzes the stability of the ODE \eqref{eq:dW}--\eqref{eq:dM}. However, the analysis considers the regime $\tau\to0$ whereas we treat the case $\tau=\frac12$.} However, numerical experiments indicate that the optimal convergence rate occurs around $\tau=\frac12$, see \cite[Figure 4]{pehlevan2018similarity}, suggesting that a separation of time-scales is unnecessary for proving global convergence and that $\tau=\frac12$ may be a parameter of particular interest.

\section{Global stability of the synaptic dynamics}

For the case $\tau=\frac12$, we prove global convergence of the ODE \eqref{eq:dW}--\eqref{eq:dM} to the desired principal subspace. We assume $\A$ is a positive definite $n\times n$ matrix with eigenvalues $\lambda_1\ge\cdots\ge\lambda_n>0$ and $k<n$ is fixed. The following theorem is our main result.

\begin{theorem}\label{thm:aeconvergence}
    Let $\A\in\calS_{++}^n$.
    For every $\tau>0$ and $(\W_0,\M_0)\in\calD:=\R^{k\times n}\times\calS_{++}^k$, there exists a unique solution $(\W(t),\M(t))$ of the ODE \eqref{eq:dW}--\eqref{eq:dM} with initial condition $(\W_0,\M_0)$ for all $t\ge0$. Moreover, suppose $\lambda_k>\lambda_{k+1}$ and $\tau=\frac12$. Then there is a set $\calZ$ with Lebesgue measure zero such that if $(\W_0,\M_0)\in\calD\setminus\calZ$, then the solution $(\W(t),\M(t))$ converges, as $t\to\infty$, to the set of equilibrium points $(\W_\ast,\M_\ast)$ of the ODE such that the neural filters (i.e., the row vectors of $\F_\ast:=\M_\ast^{-1}\W_\ast$) are orthonormal and span the principal subspace of $\A$.
\end{theorem}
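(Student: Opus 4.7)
My plan follows the two-phase structure described in the introduction.  Local existence and uniqueness on $\calD$ is immediate from Picard--Lindel\"of since the right-hand side of \eqref{eq:dW}--\eqref{eq:dM} is real-analytic on $\calD$; the only obstruction to global existence is $\M$ leaving $\calS_{++}^k$.  Rewriting \eqref{eq:dM} as $\tfrac{d}{dt}(e^{t/\tau}\M) = \tau^{-1}e^{t/\tau}\M^{-1}\W\A\W^\top\M^{-1}\succeq 0$ in the positive semidefinite order yields $\M(t)\succeq e^{-t/\tau}\M_0\succ 0$ for all finite $t$, and a complementary Gr\"onwall estimate bounds $\|\W\|_F$ and the top eigenvalue of $\M$, giving global solutions for every $\tau>0$.

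\textbf{Phase 1 (invariant manifold).}  Specializing to $\tau=\tfrac12$ and setting $\F:=\M^{-1}\W$, a direct computation using $\M^\top=\M$, $\M^{-1}\W\A\W^\top = \F\A\W^\top$, and $\F\A\F^\top\M = \F\A\W^\top$ yields
\[
\frac{d}{dt}\bigl(\W\W^\top - \M^2\bigr) \;=\; -4\bigl(\W\W^\top - \M^2\bigr),
\]
hence $\W(t)\W(t)^\top - \M(t)^2 = e^{-4t}(\W_0\W_0^\top - \M_0^2)$ decays exponentially.  Every trajectory is thus driven to the invariant manifold $\calM_1 := \{(\W,\M)\in\calD:\W\W^\top = \M^2\}$, on which $\F\F^\top = \M^{-1}\W\W^\top\M^{-1} = \I_k$, so that the rows of $\F$ are orthonormal.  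This identity is the crux of the proof.

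\textbf{Phase 2 (Lyapunov function and saddle avoidance).}  Differentiating $\F=\M^{-1}\W$ and using the $\tau=\tfrac12$ dynamics yields $\dot\F = 2\M^{-1}\F\A(\I_n - \F^\top\F)$.  On $\calM_1$ the operator $\P := \I_n - \F^\top\F$ is an orthogonal projection (since $\F\F^\top=\I_k$ makes it idempotent), so $\G := \F\A\P$ satisfies $\G\G^\top = \F\A\P\A\F^\top$ and
\[
\frac{d}{dt}\tr(\F\A\F^\top) \;=\; 4\tr\bigl(\M^{-1}\G\G^\top\bigr) \;\ge\; 0.
\]
Hence $V := -\tr(\F\A\F^\top)$ is a Lyapunov function on $\calM_1$; compactness of the Stiefel manifold $\{\F\F^\top=\I_k\}$ and boundedness of $\M$ allow LaSalle's invariance principle to conclude convergence to the equilibrium set $\calE_0 := \{(\F,\M):\F\A\P=0,\;\M=\F\A\F^\top\}$.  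The condition $\F\A\P=0$ states that $\mathrm{rowspan}(\F)$ is an $\A$-invariant $k$-dimensional subspace, so equilibria correspond bijectively to choices of $k$ eigenvectors of $\A$, and by the Ky Fan trace-maximum principle the global minima of $V$ are precisely the principal-subspace equilibria.  For almost-every-initialization convergence, I would combine the linear stability analysis of \citet{pehlevan2018similarity} (valid for $\tau\le\tfrac12$ and showing every non-principal equilibrium is a hyperbolic saddle of the full ODE) with the stable manifold theorem: each of the finitely many non-principal equilibria has a local stable manifold of positive codimension in $\calD$, and the countable union of their forward-flow preimages is the Lebesgue-null set $\calZ$.

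\textbf{Main obstacle.}  The delicate point is interlocking the off-manifold exponential contraction of Phase 1 with the on-manifold Lyapunov argument of Phase 2.  The Lyapunov identity is derived only for trajectories confined to $\calM_1$, yet the theorem quantifies trajectories of the full system on $\calD$; I expect to need either (i) a globally defined Lyapunov function on $\calD$ obtained by adding to $V$ a penalty such as $\tfrac12\|\W\W^\top - \M^2\|_F^2$ whose monotonicity is guaranteed by the Phase 1 identity, or (ii) a normally-hyperbolic slow-manifold persistence result transferring the LaSalle and stable-manifold conclusions from $\calM_1$ to an open neighborhood in $\calD$.  The Lebesgue-null conclusion similarly requires that the exponential contraction onto $\calM_1$ respect the unstable directions at each non-principal saddle; the stable manifold theorem accommodates this but demands careful bookkeeping of stable and unstable dimensions in the full system.
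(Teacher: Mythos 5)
Your high-level strategy matches the paper's: Picard--Lindel\"of plus Gr\"onwall for global existence, exponential convergence to an invariant manifold where the neural filters are orthonormal, a Lyapunov argument on (or near) that manifold, and the stable manifold theorem for saddle avoidance. Your Phase~1 computation is essentially the paper's Lemma~\ref{lem:WMconv}, stated slightly more sharply: you derive the matrix ODE $\frac{d}{dt}(\W\W^\top-\M^2)=-4(\W\W^\top-\M^2)$, from which the paper's scalar statement $L=L_0e^{-8t}$ follows by taking Frobenius norms. Your Phase~2 Lyapunov function $-\tr(\F\A\F^\top)$ with the identity $\dot\F=2\M^{-1}\F\A(\I_n-\F^\top\F)$ is a legitimate alternative to the paper's potential $V(\W)=f(\W,(\W\W^\top)^{1/2})$ and the derivative computation is correct (noting $\tr(\M^{-1}\G\A\F^\top)=\tr(\M^{-1}\G\G^\top)$ requires $\P^2=\P$, i.e.\ exact membership in $\calM_1$). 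That said, there are several genuine gaps.

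First, you have not addressed why trajectories stay bounded away from the singular set. Your LaSalle step assumes ``compactness of the Stiefel manifold and boundedness of $\M$'', but on $\calM_1$ we have $\M=(\W\W^\top)^{1/2}$, so nothing a priori prevents $\det\W\W^\top\to0$, which would make $\M$ singular and $\F$ undefined. The paper devotes Lemma~\ref{lem:setN} and Lemma~\ref{lem:detM} to exactly this: outside a measure-zero set $\calN$ (pairs where $\W^\top$ annihilates an eigenvector of $\M$), both $\|\M(t)^{-1}\|$ and $\|\W(t)\|$ stay uniformly bounded, whereas initial conditions in $\calN$ are forward-invariant and have $\det\M(t)\to0$. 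Your existence argument only gives $\M(t)\succeq e^{-t/\tau}\M_0$, which decays to zero; it does not yield the uniform-in-$t$ bound needed for LaSalle or for interpreting the gradient flow.

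Second, the ``Main obstacle'' you identify is precisely where the paper does additional work and you stop short. The paper does not take route (i) or (ii): instead, in Lemma~\ref{lem:globalconv} it splits the full $\W$-dynamics as $-2\nabla V(\W)$ plus a correction $(\M^{-1}-(\W\W^\top)^{-1/2})\W\A$, bounds the cross term by $\|\nabla V\|\cdot\|\M^{-1}-(\W\W^\top)^{-1/2}\|$, and uses the Phase~1 decay together with the uniform bounds from Lemma~\ref{lem:setN} to show the correction vanishes in the limit, so $\limsup_{t\to\infty}\dot V\le0$ and a perturbed-LaSalle argument applies. This is a more elementary resolution than either of your proposed alternatives, and you would need to reproduce something equivalent.

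Third, two smaller inaccuracies in the saddle-avoidance step: the non-principal equilibria are not finitely many and not hyperbolic. By Lemma~\ref{lem:criticalpts}, each equilibrium sits inside a $\tfrac{k(k-1)}{2}$-dimensional manifold of equilibria (rotate by $\Q\in O(k)$), so the Jacobian always has zero eigenvalues and the standard hyperbolic stable-manifold theorem does not directly apply. The paper instead invokes \citet[Prop.~3]{potrie2009local}, a result that gives measure-zero stable sets when at least one eigenvalue has positive real part, without requiring hyperbolicity. Your argument should cite a result of this kind (or a center-stable-manifold version) rather than claiming hyperbolicity.
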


The next two sections and appendix \ref{apdx:exist} are devoted to the proof of Theorem \ref{thm:aeconvergence}. 
In appendix \ref{apdx:exist} we prove existence and uniqueness of solutions, which ensures that solutions that are initialized in $\calD$ remain in $\calD$ for all $t\ge0$.
In section \ref{sec:local}, we review results from \cite{pehlevan2018similarity} on local linear stability of equilibrium points, denoted $\calE$. Global convergence is proved in section \ref{sec:stable}. We show that the synaptic weights evolve in two phases. In the first phase, for any initialization outside of a null set $\calN$, the synaptic weights first converge to the invariant manifold $\calO$ corresponding to orthonormal neural filters; that is, the set of $(\W,\M)\in\calD$ such that $\F:=\M^{-1}\W$ has orthonormal row vectors (section \ref{sec:Oconvergence} and Figure \ref{fig:vectorplot}). This convergence is captured by a convex Lyapunov function $L(\W,\M)$. Then, in the second phase, starting on (or near) the invariant manifold $\calO$, the synaptic dynamics are approximated by the gradient flow of a non-convex potential function $V(\W)$ (section \ref{sec:gradflow} and Figure \ref{fig:gradflow}). As a result, for almost any initialization, the synaptic weights converge to an equilibrium point such that the neural filters correspond to the desired principal subspace projection. 
A dependency diagram for our main result is shown in Figure \ref{fig:outline}.

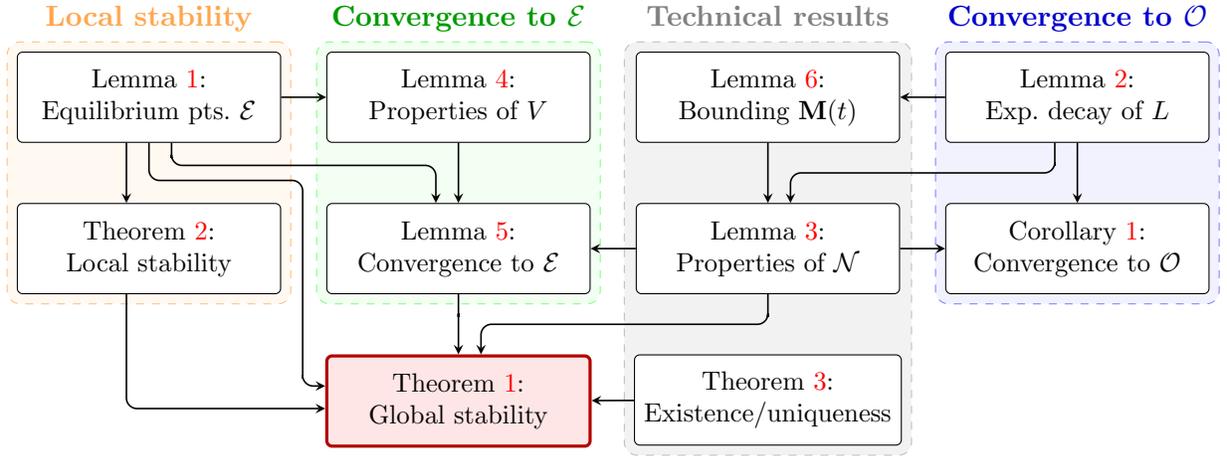
\begin{figure}
    \centering
    \begin{tikzpicture}[
    node distance=0.8cm and 0.6cm,
    block/.style={rectangle, draw, fill=white, minimum width=3.5cm, minimum height=1.2cm, align=center, rounded corners=2pt, font=\small},
    arrow/.style={-stealth, semithick},
    cat_local/.style={draw=orange!50, fill=orange!5, dashed, rounded corners=5pt},
    cat_equil/.style={draw=green!50, fill=green!5, dashed, rounded corners=5pt},
    cat_inv/.style={draw=blue!50, fill=blue!5, dashed, rounded corners=5pt},
    cat_tech/.style={draw=gray!50, fill=gray!10, dashed, rounded corners=5pt}
]

% --- Nodes ---
% Column 1
\node[block] (L1) {Lemma \ref{lem:criticalpts}: \\ Equilibrium pts.\ $\calE$};
\node[block, below=of L1] (T2) {Theorem \ref{thm:localstability}: \\ Local stability};

% Column 2
\node[block, right=of L1] (L4) {Lemma \ref{lem:V}: \\ Properties of $V$};
\node[block, below=of L4] (L5) {Lemma \ref{lem:globalconv}: \\ Convergence to $\calE$};
\node[block, below=of L5, draw=red!70!black, very thick, fill=red!10, rounded corners=2pt] (T1) {Theorem \ref{thm:aeconvergence}: \\ Global stability};

% Column 3 (Technical Results)
\node[block, right=of L4] (L6) {Lemma \ref{lem:detM}: \\ Bounding $\M(t)$};
\node[block, below=of L6] (L3) {Lemma \ref{lem:setN}: \\ Properties of $\calN$};
\node[block, below=of L3] (T3) {Theorem \ref{thm:exist}: \\ Existence/uniqueness};

% Column 4 (Invariant Manifold)
\node[block, right=of L6] (L2) {Lemma \ref{lem:WMconv}: \\ Exp.\ decay of $L$};
\node[block, below=of L2] (C1) {Corollary \ref{cor:MWinv}: \\ Convergence to $\calO$};

% --- Background Layer (Boxes and Arrows) ---
\begin{scope}[on background layer]
    % Category Boxes
    \node[cat_local, fit=(L1) (T2), label={[text=orange!70, font=\bfseries]above:Local stability}] {};
    \node[cat_equil, fit=(L4) (L5), label={[text=green!60!black, font=\bfseries]above:Convergence to $\calE$}] {};
    \node[cat_tech, fit=(L6) (L3) (T3), label={[text=gray, font=\bfseries, yshift=2pt]above:Technical results}] {};
    \node[cat_inv, fit=(L2) (C1), label={[text=blue!80!black, font=\bfseries]above:Convergence to $\calO$}] {};

    % Straight Arrows behind nodes
    \begin{scope}[arrow]
        % L1 connections
        \draw (L1) -- (L4);
        \draw[arrow] ([xshift=-3mm]L1.south) -- ([xshift=-3mm]T2.north);
        \draw[arrow, rounded corners] ([xshift=3mm]L1.south) |- ([xshift=3mm, yshift=5mm]L5.north -| L1.south) -| ([xshift=-3mm]L5.north);
        \draw[arrow, rounded corners=4pt] (L1.south) |- ([xshift=3mm, yshift=3mm]L5.north -| L1.south) -| ([xshift=-3mm, yshift=2mm]T1.west) |- ([yshift=2mm]T1.west);

        % L4 and T2 connections
        \draw (L4) -- (L5);
        \draw[arrow, rounded corners=4pt] ([xshift=-3mm]T2.south) |- ([yshift=-1mm]T1.west);

        % L2 connections
        \draw (L2) -- (L6);
        \draw[arrow, rounded corners] ([xshift=-3mm]L2.south) |- ([xshift=-3mm, yshift=4mm]L3.north -| L2.south) -| ([xshift=3mm]L3.north);
        \draw (L2) -- (C1);

        % L5 connections
        \draw[arrow, rounded corners=4pt] (L5.south) |- ([yshift=4mm]T1.north -| L5.south) -| (T1.north);

        % L3 connections
        \draw (L3) -- (L5);
        \draw (L3) -- (C1);
        \draw[arrow, rounded corners] (L3.south) |- ([yshift=4mm]T1.north -| L3.south) -| ([xshift=3mm]T1.north);

        % L6 and T3 connections
        \draw (L6) -- (L3);
        \draw (T3) -- (T1);
    \end{scope}
\end{scope}

\end{tikzpicture}
    \caption{
    Dependency diagram for our results. 
    Local stability of equilibrium points (orange region) is established in section \ref{sec:local}.
    Convergence of solutions to the invariant manifold $\calO$ (blue region) is shown in section \ref{sec:Oconvergence}.
    Convergence starting in or near the invariant manifold to the equilibrium points $\calE$ (green region) is shown in section \ref{sec:gradflow}.
    Finally, global stability of the ODE (red box) is shown in section \ref{sec:proof}.
    Technical results (gray region) are proved in appendices \ref{apdx:exist} and \ref{apdx:setN}.}
    \label{fig:outline}
\end{figure}

In section \ref{sec:stochastic}, we provide empirical evidence that the synaptic weights also evolve in two phases for the discrete online algorithm \eqref{eq:DW}--\eqref{eq:DM}. 
In section \ref{sec:tau}, we conjecture that Theorem \ref{thm:aeconvergence} can be generalized to hold for all $0<\tau\le\frac12$.

\section{Local stability of equilibrium points}\label{sec:local}

Next, we characterize the equilibrium points of the ODE \eqref{eq:dW}--\eqref{eq:dM} and recall results by \citet{pehlevan2018similarity} on their linear stability. To this end, let 
\begin{align*}
    \calE:=\{(\W,\M)\in\calD:\G(\W,\M)={\bf 0}\}    
\end{align*}
denote the set of equilibrium points, where $\G:\calD\mapsto\calD$ is the vector field defined by
\begin{align*}
    \G(\W,\M):=\left(2\M^{-1}\W\A-2\W,\frac1\tau(\M^{-1}\W\A\W^\top\M^{-1}-\M)\right).
\end{align*}

\subsection{Characterization of equilibrium points}

The following lemma, whose proof is given in appendix \ref{apdx:criticalpts}, characterizes the equilibrium points in terms of their singular value decompositions (SVDs).

\begin{lemma}\label{lem:criticalpts}
    Assume $\tau>0$ and $\A\in\calS_{++}^n$.  A pair $(\W_\ast,\M_\ast)\in\calE$ if and only if $\W_\ast=\U\S\V^\top$ and $\M_\ast=\U\S\U^\top$, where $\U$ is any $k\times k$ orthogonal matrix, $\V$ is a $d\times k$ matrix whose column vectors are orthonormal eigenvectors of the covariance matrix $\A$, and $\S$ is a $k\times k$ diagonal matrix whose diagonal entries are the eigenvalues of $\A$ corresponding to the column vectors of $\V$.
\end{lemma}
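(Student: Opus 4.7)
The plan is to reduce the equilibrium condition $\G(\W_\ast,\M_\ast)={\bf 0}$ to two clean algebraic identities and then read off the SVD structure directly. Setting the two components of $\G$ to zero yields
\begin{align*}
    \W_\ast\A &= \M_\ast\W_\ast, \\
    \W_\ast\A\W_\ast^\top &= \M_\ast^3.
\end{align*}
Substituting the first identity into the second gives $\M_\ast\W_\ast\W_\ast^\top=\M_\ast^3$, and since $\M_\ast\in\calS_{++}^k$ is invertible, this simplifies to the identity $\W_\ast\W_\ast^\top=\M_\ast^2$.

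Next I would take any SVD $\W_\ast=\U\S\V^\top$ with $\U$ a $k\times k$ orthogonal matrix, $\V\in\R^{n\times k}$ having orthonormal columns, and $\S$ a nonnegative diagonal matrix. Since $\W_\ast\W_\ast^\top=\U\S^2\U^\top$ must equal the positive definite matrix $\M_\ast^2$, every diagonal entry of $\S$ is strictly positive; and uniqueness of the positive definite square root of a positive definite matrix forces $\M_\ast=\U\S\U^\top$. Substituting this back into $\W_\ast\A=\M_\ast\W_\ast$ produces $\U\S\V^\top\A=\U\S^2\V^\top$, and cancelling the invertible factor $\U\S$ on the left yields $\V^\top\A=\S\V^\top$, equivalently $\A\V=\V\S$. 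This says precisely that the columns of $\V$ are orthonormal eigenvectors of $\A$ with corresponding eigenvalues given by the diagonal entries of $\S$, which is the forward direction.

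For the converse, given $\W_\ast=\U\S\V^\top$ and $\M_\ast=\U\S\U^\top$ of the stated form, direct substitution---using $\U^\top\U=\I$, $\V^\top\V=\I$, and $\A\V=\V\S$---recovers both equilibrium equations: $\M_\ast^{-1}\W_\ast\A=\U\V^\top\A=\U(\V\S)^\top\cdot$ wait, more cleanly, $\M_\ast^{-1}\W_\ast\A=\U\V^\top(\A\V)\V^\top\cdot\ldots$; in any case the computation reduces to the two displayed identities above and then to $\W_\ast$ and $\M_\ast$ respectively. The argument is essentially routine linear algebra, so I do not anticipate a serious obstacle. The only mild care required is to allow $\S$ to have repeated diagonal entries (when $\A$ has eigenvalues of multiplicity greater than one), but since the statement only asserts the \emph{existence} of an SVD of the claimed form and not its uniqueness, this causes no difficulty.
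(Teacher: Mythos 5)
Your proof is correct and follows the same basic strategy as the paper's (extract the two equilibrium equations, take the SVD of $\W_\ast$, identify the eigenstructure), but the algebraic path you take is cleaner and in one respect more careful. The paper derives $\M_\ast=(\W_\ast\A\W_\ast^\top)^{1/3}$, substitutes into the $\W$-equilibrium equation after right-multiplying by $\W_\ast^\top$, and arrives at $(\W_\ast\A\W_\ast^\top)^{2/3}=\W_\ast\W_\ast^\top$; plugging in the SVD it reads off $\V^\top\A\V=\S$ and from diagonality concludes the columns of $\V$ are eigenvectors. Your route eliminates $\A$ between the two equilibrium equations to get the simpler identity $\W_\ast\W_\ast^\top=\M_\ast^2$ --- which, incidentally, exhibits every equilibrium as lying on the invariant manifold $\calO$ --- and then uniqueness of the positive definite square root gives $\M_\ast=\U\S\U^\top$ immediately; feeding this back into the \emph{full} first-order equation $\W_\ast\A=\M_\ast\W_\ast$ yields $\A\V=\V\S$ directly. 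This is a genuine improvement in rigor, not just cosmetics: $\V^\top\A\V=\S$ diagonal does not by itself force the columns of $\V$ to be eigenvectors of $\A$ (one also needs that the column span of $\V$ is $\A$-invariant, which does follow from $\W_\ast\A=\M_\ast\W_\ast$ but which the paper's proof leaves implicit), whereas your derivation of $\A\V=\V\S$ closes that gap. One cleanup item: your converse computation trails off mid-display; the correct verification is $\M_\ast^{-1}\W_\ast\A=\U\S^{-1}\U^\top\U\S\V^\top\A=\U\V^\top\A=\U\S\V^\top=\W_\ast$ (using $\V^\top\A=\S\V^\top$) and $\M_\ast^{-1}\W_\ast\A\W_\ast^\top\M_\ast^{-1}=\W_\ast\W_\ast^\top\M_\ast^{-1}=\U\S^2\U^\top\U\S^{-1}\U^\top=\U\S\U^\top=\M_\ast$.
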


As a consequence of Lemma \ref{lem:criticalpts}, if $(\W_\ast,\M_\ast)\in\calE$, then $\W_\ast$ is full rank (since we assume the eigenvalues of $\A$ are positive) and $\F_\ast:=\M_\ast^{-1}\W_\ast=\U\V^\top$, where $\U$ is a $k\times k$ orthogonal matrix and the column vectors of $\V$ are orthonormal eigenvectors of $\A$. In other words, the row vectors of $\F_\ast$ (i.e., the neural filters) are orthonormal and span an eigen-subspace of $\A$.
In addition, if $(\W_\ast,\M_\ast)\in\calE$, then $(\Q\W_\ast,\Q\M_\ast\Q^\top)\in\calE$ for every $k\times k$ orthogonal matrix $\Q$. In particular, each equilibrium point is an element of a $\frac{k(k-1)}{2}$ dimensional manifold of equilibrium points corresponding to an eigen-subspace of $\A$.

\subsection{Local linear stability analysis}    
    
For our purposes, we say an equilibrium point $(\W_\ast,\M_\ast)\in\calE$ is `linearly stable' if all of the eigenvalues of the Jacobian of $\G$ evaluated at $(\W_\ast,\M_\ast)$ have nonpositive real part and it is `linearly unstable' if it is not linearly stable; that is, at least one of the eigenvalues of the Jacobian of $\G$ evaluated at $(\W_\ast,\M_\ast)$ has positive real part. Let
\begin{align*}
    \calE_0:=\{(\W,\M)\in\calE:\text{the rows of $\W$ span the principal subspace of $\A$}\}.    
\end{align*}
The next result follows immediately from \citep[Theorem 1]{pehlevan2018similarity}.

\begin{theorem}
\label{thm:localstability}
    Suppose $0<\tau\le\frac12$ and $\A\in\calS_{++}^n$. An equilibrium point $(\W_\ast,\M_\ast)\in\calE$ is linearly stable if and only if $(\W_\ast,\M_\ast)\in\calE_0$.
\end{theorem}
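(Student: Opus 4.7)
The plan is to obtain the result directly by linearizing the vector field $\G$ at a generic equilibrium and exploiting the SVD structure from Lemma \ref{lem:criticalpts}; the statement then follows by tracking which eigenvalues of the Jacobian fail to have nonpositive real part. Since the set of equilibria is invariant under the action $(\W,\M)\mapsto(\Q\W,\Q\M\Q^\top)$ for any orthogonal $\Q$, I would first use this symmetry to reduce to the canonical representative $\U=\I$, so that $\W_\ast=\Lambda\V^\top$ and $\M_\ast=\Lambda$, where $\Lambda$ is the $k\times k$ diagonal matrix of eigenvalues of $\A$ associated with the columns of $\V$. Letting $\V_\perp$ be an $n\times(n-k)$ completion of $\V$ to an orthonormal basis of eigenvectors of $\A$ with associated diagonal eigenvalue matrix $\Lambda_\perp$, I would write perturbations as $\delta\W=\P\V^\top+\Q\V_\perp^\top$ with $\P\in\R^{k\times k}$, $\Q\in\R^{k\times(n-k)}$ and $\delta\M\in\bbS^k$ symmetric.

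Using the equilibrium identities $\M_\ast^{-1}\W_\ast\A=\W_\ast$ and $\M_\ast^{-1}\W_\ast\A\W_\ast^\top\M_\ast^{-1}=\M_\ast$, the linearization simplifies considerably. I would then show that the Jacobian block-diagonalizes into an ``out-of-subspace'' block acting on $\Q$ alone and an ``in-subspace'' block coupling $\P$ and $\delta\M$. The out-of-subspace block reduces, component by component, to scalar dynamics of the form
\begin{align*}
\dot q_{ij}=2\bigl(\mu_j/\lambda_i-1\bigr)q_{ij},
\end{align*}
with $\lambda_i$ a diagonal entry of $\Lambda$ and $\mu_j$ of $\Lambda_\perp$. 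All these eigenvalues are nonpositive precisely when $\mu_j\le\lambda_i$ for every such pair, i.e.\ when the columns of $\V$ span the principal subspace. A strictly positive eigenvalue arises as soon as some $\mu_j>\lambda_i$, giving the ``only if'' direction.

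Next, I would analyze the in-subspace block for $(\W,\M)\in\calE_0$. Splitting $\P$ into its symmetric and antisymmetric parts relative to $\Lambda$, I would show that the antisymmetric part of $\P$ together with the antisymmetric part of $\delta\M$ generates the tangent space to the orbit of the $O(k)$-action (a $\frac{k(k-1)}{2}$-dimensional zero eigenspace), consistent with the remark following Lemma \ref{lem:criticalpts}. On the complementary ``symmetric'' directions, the coupled $(\P,\delta\M)$ dynamics decouple pairwise into $2\times2$ linear systems whose characteristic polynomial depends on $\tau$; a direct computation of the trace and determinant (using $\tau>0$ and the positivity of the eigenvalues of $\Lambda$) shows that both eigenvalues have nonpositive real part exactly when $\tau\le\tfrac12$, which supplies the ``if'' direction and also pins down the critical role of this threshold.

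The main obstacle I expect is the $\tau\le\tfrac12$ step: the $2\times2$ blocks in the symmetric in-subspace sector are not symmetric, so I would need a careful Routh--Hurwitz type check (sign of trace and determinant) rather than a direct Lyapunov argument, and the computation must be carried out for every pair of indices to verify that the bound is sharp. Since all of this is essentially the computation performed in \citep[Theorem 1]{pehlevan2018similarity}, the cleanest write-up is to set up the block decomposition above and then invoke that theorem for the spectral calculations, which is the route taken in the next section.
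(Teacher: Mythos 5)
Your proposal is correct and lands on essentially the same approach as the paper: the paper's entire proof is the one-line citation of \citep[Theorem~1]{pehlevan2018similarity}, and you sketch the linearization argument internal to that theorem before concluding that invoking it is the cleanest write-up. The block decomposition into out-of-subspace ($\Q$) and in-subspace ($\P,\delta\M$) perturbations, the scalar form $\dot q_{ij}=2(\mu_j/\lambda_i-1)q_{ij}$, and the Routh--Hurwitz analysis of the symmetric $2\times2$ blocks that produces the $\tau\le\tfrac12$ threshold all match the structure of the cited proof.
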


\section{Global convergence analysis}
\label{sec:stable}

We now prove our main results on global stability of the ODE \eqref{eq:dW}--\eqref{eq:dM} for $\tau=\frac12$, which is fixed throughout this section. Figure \ref{fig:vectorplot} shows the vector field $\G(W,M)$ in the scalar case $k=n=1$. 
In section \ref{sec:Oconvergence}, we show that the neural filters are asymptotically orthonormal---in Figure \ref{fig:vectorplot}, this corresponds to the initial convergence of trajectories to the blue line. Then, in section \ref{sec:gradflow}, we show that starting from (or near) an orthonormal initialization, the ODE \eqref{eq:dW} governing the dynamics of the feedforward weights $\W$ can be approximated as gradient flow of a potential function whose minima correspond to the principal subspace---in Figure \ref{fig:vectorplot}, this corresponds to the convergence of trajectories with initial conditions on (or near) the blue line to the red dots. Finally, in section \ref{sec:proof}, we combine these results to prove Theorem \ref{thm:aeconvergence}.

\begin{figure}
    \centering
    \includegraphics[width=.8\linewidth]{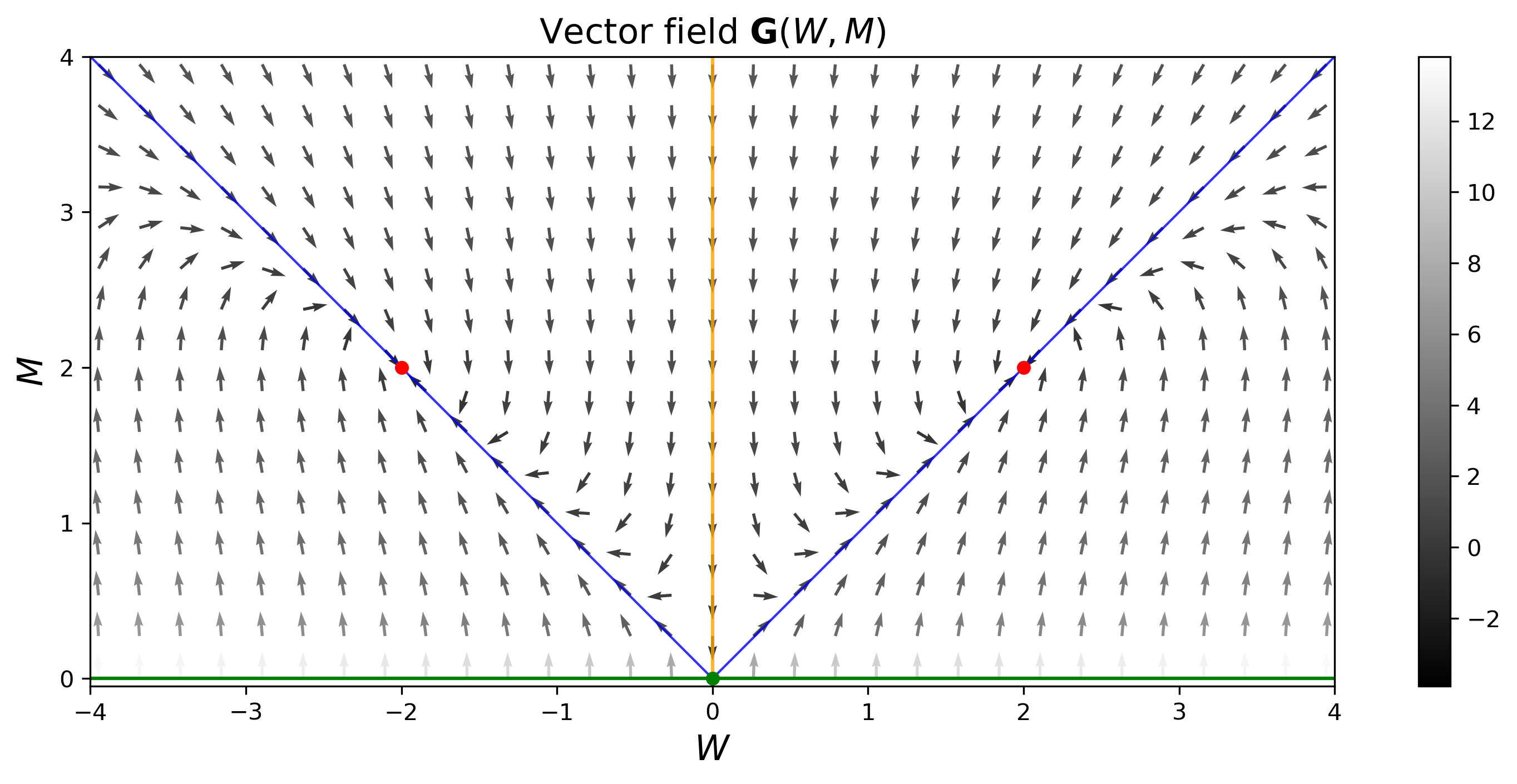}
    \caption{Plot of the vector field $\G(W,M)$ in the case $k=n=1$ and $\lambda_1=2$. The grayscale indicates the \textit{logarithm} of the vector magnitude. The blue lines denote the set $\calO$, the orange vertical line denotes the set $\calN$, the 2 red dots denote the set $\calE_0$ (which is equal to $\calE$ in this case), and the green line indicates that the line $M=0$ does not belong to $\calD=\R\times(0,\infty)$.}
    \label{fig:vectorplot}
\end{figure}

\subsection{Asymptotic orthonormality of the neural filters}
\label{sec:Oconvergence}

In this section, we show that for almost every initialization, the solution $(\W(t),\M(t))$ to the ODE \eqref{eq:dW}--\eqref{eq:dM} converges to the following invariant subset of matrices in $\calD$ that correspond to orthonormal neural filters $\F=\M^{-1}\W$:
\begin{align}\label{eq:Odef}
    \calO:=\left\{(\W,\M)\in\calD:\M^{-1}\W\W^\top\M^{-1}=\I_k\right\}.
\end{align}
To show this convergence, we define the following convex Lyapunov function on $\calD$:
\begin{align}
    L(\W,\M):=\norm{\W\W^\top-\M^2}^2=\tr\left[(\W\W^\top-\M^2)^2\right]. \label{eq:L}
\end{align}
Note that $L(\W,\M)$ is nonnegative everywhere and equal to zero if and only $(\W,\M)\in\calO$. 
% We show that for any initialization, $L(\W(t),\M(t))$ converges to zero as $t\to\infty$. 
% and, consequently, the neural filters are asymptotically orthonormal provided $(\W_0,\M_0)\not\in\calN$.

\begin{lemma}\label{lem:WMconv}
Suppose $\A\in\calS_{++}^n$ and $(\W(t),\M(t))$ is a solution of the ODE \eqref{eq:dW}--\eqref{eq:dM} with $\tau=\frac12$ and initial condition $(\W_0,\M_0)\in\calD$. Then
\begin{align}\label{eq:Ldecay}
    L(\W(t),\M(t))=L(\W_0,\M_0)e^{-8t},\qquad t\ge0.
\end{align}
\end{lemma}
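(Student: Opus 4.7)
The plan is to exploit a clean algebraic cancellation at $\tau=\frac12$: I will show that the symmetric matrix $\RR(t):=\W(t)\W(t)^\top-\M(t)^2$ satisfies the linear matrix ODE $\dot{\RR}=-4\RR$, from which the exponential decay of $L=\tr(\RR^2)$ follows immediately. Along the way I use that $\M(t)$ stays symmetric (and positive definite) along trajectories: since $\A$ is symmetric, the right-hand side of \eqref{eq:dM} is symmetric whenever $\M$ is, so symmetry is preserved, and the invariance of $\calD$ is furnished by Theorem \ref{thm:exist}.

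First, I differentiate $\W\W^\top$ using the product rule and \eqref{eq:dW} in the form $\dot{\W}=2\M^{-1}\W\A-2\W$. Since $\A$ and $\M^{-1}$ are symmetric, taking transposes gives
\begin{align*}
    \frac{\dd}{\dd t}(\W\W^\top)=\dot{\W}\W^\top+\W\dot{\W}^\top=2\M^{-1}\W\A\W^\top+2\W\A\W^\top\M^{-1}-4\W\W^\top.
\end{align*}
Next, with $\tau=\frac12$, equation \eqref{eq:dM} becomes $\dot{\M}=2\M^{-1}\W\A\W^\top\M^{-1}-2\M$, so using $\M\M^{-1}=\M^{-1}\M=\I_k$,
\begin{align*}
    \frac{\dd}{\dd t}(\M^2)=\dot{\M}\M+\M\dot{\M}=2\M^{-1}\W\A\W^\top+2\W\A\W^\top\M^{-1}-4\M^2.
\end{align*}
Subtracting, the two mixed terms cancel exactly and I obtain $\dot{\RR}=-4\RR$. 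Since this is a linear ODE with constant coefficient, its unique solution is $\RR(t)=e^{-4t}\RR(0)$, hence
\begin{align*}
    L(\W(t),\M(t))=\tr(\RR(t)^2)=e^{-8t}\tr(\RR(0)^2)=L(\W_0,\M_0)e^{-8t}.
\end{align*}

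There is no real obstacle here: the whole argument hinges on the observation that at $\tau=\frac12$ the mixed terms $\M^{-1}\W\A\W^\top$ and $\W\A\W^\top\M^{-1}$ occur with \emph{identical} coefficients in both $\frac{\dd}{\dd t}(\W\W^\top)$ and $\frac{\dd}{\dd t}(\M^2)$, so that $\W\W^\top-\M^2$ decouples from every other quantity in the dynamics. For $\tau\neq\frac12$ the $\M$-side acquires a factor $1/\tau$ in the mixed terms and $-2/\tau$ in the $\M^2$ term, the cancellation fails, and $L$ no longer satisfies a closed scalar equation—this is precisely the algebraic reason the paper's global analysis is restricted to $\tau=\frac12$.
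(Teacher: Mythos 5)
Your proof is correct, and it rests on exactly the same algebraic cancellation the paper exploits: at $\tau=\tfrac12$, the mixed terms $\M^{-1}\W\A\W^\top$ and $\W\A\W^\top\M^{-1}$ enter the time derivatives of $\W\W^\top$ and $\M^2$ with identical coefficients and drop out when you subtract. The packaging differs slightly. The paper computes only the one-sided quantity $\dot\W\W^\top-\dot\M\M=-2(\W\W^\top-\M^2)$ and then passes to the scalar $L$ via the chain rule and the cyclic/transpose invariance of the trace, giving $\dot L=-8L$ directly. You instead carry out the full product rule for $\RR=\W\W^\top-\M^2$ (using symmetry of $\A$ and $\M^{-1}$ to handle the transposed halves) and arrive at the closed linear matrix ODE $\dot\RR=-4\RR$, from which $\RR(t)=e^{-4t}\RR(0)$ and hence \eqref{eq:Ldecay} follow at once. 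Your route is marginally more work algebraically but yields the slightly stronger intermediate fact that the matrix $\RR(t)$ itself decays exponentially with a fixed direction, not merely its Frobenius norm; the paper's route is shorter because the trace symmetry lets one avoid computing the right-half of the product rule. Both correctly identify the special role of $\tau=\tfrac12$, and your closing remark on why the cancellation fails for $\tau\neq\tfrac12$ matches the discussion in section \ref{sec:tau}.
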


\begin{proof}
First, note that the ODE \eqref{eq:dW}--\eqref{eq:dM} implies
\begin{align*}
    \frac{\dd\W(t)}{\dd t}\W(t)^\top-\frac{\dd\M(t)}{\dd t}\M(t)&=2\M(t)^{-1}\W(t) \A\W(t)^\top -2\W(t)\W(t)^\top\\
    &\qquad-\left[2\M(t)^{-1}\W(t)\A\W(t)^\top\M(t)^{-1}\M(t)-2\M(t)^2\right]\\
    &=-2\left[\W(t)\W(t)^\top-\M(t)^2\right].
\end{align*}
Therefore, by the chain rule, the product rule, the cyclic property of the trace rule, and the previous display,
\begin{align*}
    \frac{\dd}{\dd t}L(\W(t),\M(t))&=4\tr\left[\left(\W(t)\W(t)^\top-\M(t)^2\right)\left(\frac{\dd \W(t)}{\dd t}\W(t)^\top-\frac{\dd \M(t)}{\dd t}\M(t)\right)\right]\\
    &=-8L(\W(t),\M(t)).
\end{align*}
Solving the differential equation by separation of variables yields equation \eqref{eq:Ldecay}. 
\end{proof}

To prove that the neural filters are asymptotically orthonormal, we need a technical lemma that states conditions under which the solution $(\W(t),\M(t))$ does not converge to zero.
To this end, define the set
\begin{equation}\label{eq:calN}
    \calN=\{(\W,\M)\in\calD: \exists\,\v\in\R^k,\,\lambda>0, \text{ such that}\;\W^\top\v={\bf 0}\text{ and } \M\v=\lambda\v\}.
\end{equation}
Then $\calN$ is the set of pairs $(\W,\M)\in\calD$ such that $\W\W^\top$ is singular \textit{and} there is an eigenvector of $\M$ in the null space of $\W\W^\top$. 
Note that if $\W$ is full rank and $\M$ is positive definite, then $(\W,\M)\not\in\calN$.
The set $\calN$ corresponds to the orange vertical line in Figure \ref{fig:vectorplot}. The following technical lemma establishes that if for any initialization not in $\calN$, the solution to the ODE \eqref{eq:dW}--\eqref{eq:dM} remains bounded away from zero and infinity.
The proof is provided in appendix \ref{apdx:setN}.

\begin{lemma}\label{lem:setN}
The set $\calN$ has Lebesgue measure zero. Suppose $\A\in\calS_{++}^n$ and $(\W(t),\M(t))$ is the solution of the ODE \eqref{eq:dW}--\eqref{eq:dM} with $\tau=\frac12$ and starting at $(\W_0,\M_0)\in\calD$. If $(\W_0,\M_0)\not\in\calN$, then $(\W(t),\M(t))\not\in\calN$ for all $t\ge0$ and
\begin{align*}
    \limsup_{t\to\infty}\left\{\|\M(t)^{-1}\|+\|\W(t)\|\right\}<\infty.
\end{align*}
On the other hand, if $(\W_0,\M_0)\in\calN$, then $(\W(t),\M(t))\in\calN$ for all $t\ge0$ and 
\begin{align*}
    \lim_{t\to\infty}\det(\M(t))=0.
\end{align*}
\end{lemma}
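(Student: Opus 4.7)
The plan is to establish the three claims in sequence.

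\textbf{Measure zero.} If $(\W,\M)\in\calN$ with witness $(\v,\lambda)$, then $\W\W^\top\v=\W(\W^\top\v)=\mathbf 0$, so $\W$ fails to have full row rank $k$. The rank-deficient locus in $\R^{k\times n}$ is the common zero set of the $k\times k$ minors of $\W$; since $n>k$ these minors are not all identically zero, so the locus is a proper algebraic subvariety and hence Lebesgue-null. By Fubini, $\calN\subset\{\W:\operatorname{rank}\W<k\}\times\calS_{++}^k$ has Lebesgue measure zero in $\calD$.

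\textbf{Invariance.} Suppose $(\W_0,\M_0)\in\calN$ with witness $(\v_0,\lambda_0)$. The key claim is that
\[
\W(t)^\top\v_0\equiv\mathbf 0 \quad\text{and}\quad \M(t)\v_0\equiv\lambda_0 e^{-2t}\v_0 \quad\text{for all }t\ge 0.
\]
Setting $g(t):=\W(t)^\top\v_0$ and $h(t):=\M(t)\v_0-\lambda_0 e^{-2t}\v_0$, we have $g(0)=h(0)=\mathbf 0$. Differentiating along \eqref{eq:dW}--\eqref{eq:dM} with $\tau=\tfrac12$, and using the identity $\M^{-1}\v_0=\lambda_0^{-1}e^{2t}(\v_0-\M^{-1}h)$ obtained by rearranging the definition of $h$, one finds that
\[
\dot g=2\A\W^\top\M^{-1}\v_0-2g,\qquad \dot h=2\M^{-1}\W\A\W^\top\M^{-1}\v_0-2h
\]
both depend linearly on $(g,h)$ with coefficients that are smooth in $t$ along the global solution guaranteed by Theorem~\ref{thm:exist}, and both vanish at $(g,h)=(\mathbf 0,\mathbf 0)$. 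Uniqueness for linear ODEs forces $(g,h)\equiv(\mathbf 0,\mathbf 0)$, giving forward invariance of $\calN$. Applying the same identity at an arbitrary time $t^\ast\ge 0$ shows that $(\W(t^\ast),\M(t^\ast))\in\calN$ with witness $(\v^\ast,\lambda^\ast)$ forces $(\W_0,\M_0)\in\calN$ with witness $(\v^\ast,\lambda^\ast e^{2t^\ast})$, so $\calD\setminus\calN$ is also forward invariant.

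\textbf{Case $(\W_0,\M_0)\in\calN$.} By the invariance claim, $\lambda_0 e^{-2t}$ is always an eigenvalue of $\M(t)$. In an orthonormal basis of $\R^k$ whose first vector is $\v_0/\|\v_0\|$, the state becomes block-diagonal: $\M(t)=\diag(\lambda_0 e^{-2t},\M_\perp(t))$ and $\W(t)$ has vanishing first row with lower block $\W_\perp(t)$. Direct substitution into the ODE shows $(\W_\perp,\M_\perp)$ satisfies the same system in dimension $k-1$. By induction on $k$ (the base case $k=1$ being immediate, since then $\calN=\{0\}\times(0,\infty)$ forces $\W\equiv 0$ and $\dot\M=-2\M$), $\|\M_\perp(t)\|$ remains bounded, so $\det(\M(t))=\lambda_0 e^{-2t}\det(\M_\perp(t))\to 0$.

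\textbf{Case $(\W_0,\M_0)\notin\calN$.} The upper bounds on $\|\W(t)\|$ and $\|\M(t)\|$ follow from Lemma~\ref{lem:detM}, which bounds $\|\M(t)\|$ uniformly, together with $\|\W\|^2\le\|\M\|^2+\|\W_0\W_0^\top-\M_0^2\|$ coming from Lemma~\ref{lem:WMconv}'s identity $\W\W^\top-\M^2=e^{-4t}(\W_0\W_0^\top-\M_0^2)$. \emph{The main obstacle} is the lower bound $\liminf_t\lambda_{\min}(\M(t))>0$, equivalently an upper bound on $\|\M(t)^{-1}\|$. My plan is to combine the forward invariance of $\calD\setminus\calN$ with the estimate
\[
\tfrac{\dd}{\dd t}\log\det(\W\W^\top)\ge 4\lambda_n\tr(\M^{-1})-4k,
\]
derived from $\W\A\W^\top\succeq\lambda_n\W\W^\top$, which forces $\log\det(\W\W^\top)$ to be non-decreasing whenever some eigenvalue of $\M$ dips below $\lambda_n/k$. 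Together with $\W\W^\top-\M^2\to\mathbf 0$ and the uniform bound on $\|\M\|$, this should rule out $\det\M(t)\to 0$. The delicate step is converting the qualitative hypothesis $(\W_0,\M_0)\notin\calN$ into a quantitative one: one must argue that any approach of $\lambda_{\min}(\M(t_n))$ to zero along a subsequence $t_n\to\infty$ would produce, in the limit, a common null direction of $\W\W^\top$ and $\M$, contradicting forward invariance of $\calD\setminus\calN$.
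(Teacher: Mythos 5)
Your treatment of the measure-zero claim and the invariance claim is correct and, if anything, more careful than the paper's (the paper's proof of Lemma~\ref{lem:setN} establishes forward invariance of $\calN$ by the same derivative computations but does not explicitly argue the backward invariance needed to conclude $\calD\setminus\calN$ is forward invariant; your time-reversal observation closes that gap cleanly). The block-diagonalization of the $\calN$ case is also correct and is a genuine improvement on the paper, which asserts $\det\M(t)\to0$ without justifying why the remaining eigenvalues of $\M(t)$ cannot grow fast enough to cancel the decaying one.

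However, there are two gaps. \textbf{First}, in the $\calN$ case, your induction invokes "$\|\M_\perp(t)\|$ remains bounded," but if the reduced pair $(\W_\perp,\M_\perp)$ again lies in $\calN_{k-1}$, the statement you are proving by induction gives only $\det\M_\perp\to0$, not a bound on $\|\M_\perp\|$, so the step $\det\M = \lambda_0 e^{-2t}\det\M_\perp\to0$ does not follow as written. The fix is to strengthen the induction hypothesis to include $\limsup_t\|\M(t)\|<\infty$ in the $\in\calN$ case as well (for $\notin\calN$ Lemma~\ref{lem:detM} already gives it, and the base case $k=1$ is trivial), after which everything propagates. \textbf{Second}, and more substantially, in the $\notin\calN$ case you identify the lower bound $\liminf_t\lambda_{\min}(\M(t))>0$ as "the main obstacle" and sketch a program to re-prove it, but this is already the first conclusion of Lemma~\ref{lem:detM}: $\liminf_t\det\M(t)>0$ together with $\limsup_t\|\M(t)\|<\infty$ immediately gives $\lambda_{\min}(\M(t))\ge\det\M(t)/\|\M(t)\|^{k-1}$ bounded away from zero, hence $\|\M(t)^{-1}\|$ bounded. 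The paper's own proof of this case is exactly that two-line combination of Lemmas~\ref{lem:WMconv} and~\ref{lem:detM}. Your proposed re-derivation (the $\log\det(\W\W^\top)$ monotonicity estimate) is, in fact, the core of the paper's proof of Lemma~\ref{lem:detM} itself, so you are reconstructing a result you could have simply cited; moreover, you flag the conversion from "$\notin\calN$" to a quantitative bound as a delicate open step, when the cited lemma already resolves it.
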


The following corollary of Lemmas \ref{lem:WMconv} and \ref{lem:setN} states that almost every solution of the ODE converges exponentially to the invariant manifold $\calO$.

\begin{corollary}\label{cor:MWinv}
    Suppose $\A\in\calS_{++}^n$ and $(\W(t),\M(t))$ is a solution of the ODE \eqref{eq:dW}--\eqref{eq:dM} with  $\tau=\frac12$ and initial condition $(\W_0,\M_0)\in\calD\setminus\calN$. 
    Then $(\W(t),\M(t))$ converges to $\calO$ as $t\to\infty$.
\end{corollary}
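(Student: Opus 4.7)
The plan is to combine the exponential decay from Lemma \ref{lem:WMconv} with the uniform boundedness of $\M(t)^{-1}$ guaranteed by Lemma \ref{lem:setN}. By Lemma \ref{lem:WMconv} and the definition of $L$,
\[
\norm{\W(t)\W(t)^\top - \M(t)^2} \le \sqrt{L(\W_0,\M_0)}\, e^{-4t} \longrightarrow 0,
\]
so $\M(t)^2$ and $\W(t)\W(t)^\top$ become indistinguishable exponentially fast. Since $(\W_0,\M_0)\notin\calN$, Lemma \ref{lem:setN} furnishes constants $C>0$ and $T\ge 0$ such that $\norm{\M(t)^{-1}} \le C$ and $\norm{\W(t)} \le C$ for all $t\ge T$.

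The key algebraic identity I would exploit is
\[
\M(t)^{-1}\W(t)\W(t)^\top\M(t)^{-1} - \I_k \;=\; \M(t)^{-1}\bigl(\W(t)\W(t)^\top - \M(t)^2\bigr)\M(t)^{-1},
\]
so submultiplicativity of the operator norm immediately yields
\[
\norm{\M(t)^{-1}\W(t)\W(t)^\top\M(t)^{-1} - \I_k} \le C^{2}\sqrt{L(\W_0,\M_0)}\,e^{-4t} \longrightarrow 0.
\]
Since $\calO$ is cut out by the equation $\M^{-1}\W\W^\top\M^{-1} = \I_k$, this already shows that the defining quantity of $\calO$ converges to its $\calO$-value exponentially.

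To convert this into literal convergence $\dist((\W(t),\M(t)),\calO) \to 0$ in the ambient metric on $\calD$, I would exhibit an explicit nearby point in $\calO$. Because $\M(t)^2 \succeq C^{-2}\I_k$ on $[T,\infty)$ and $\M(t)^2 - \W(t)\W(t)^\top \to 0$, the matrix $\W(t)\W(t)^\top$ is eventually positive definite with spectrum trapped in a compact subinterval of $(0,\infty)$, so $\M_O(t) := (\W(t)\W(t)^\top)^{1/2}$ is well defined and $(\W(t),\M_O(t))\in\calO$ by construction. Lipschitz continuity of the matrix square root on positive definite matrices with uniformly bounded spectrum then gives $\norm{\M(t) - \M_O(t)} \to 0$, and the conclusion follows.

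The main obstacle, though mild, is this final metric translation: one must keep $\M(t)$ inside a region where the square-root map is Lipschitz, which is precisely why the hypothesis excludes $\calN$---on $\calN$, Lemma \ref{lem:setN} shows $\det\M(t)\to 0$, which would destroy the uniform invertibility that the argument relies on. Otherwise the algebraic identity together with the two preceding lemmas does essentially all of the work.
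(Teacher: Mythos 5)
Your proof is correct and uses the same core argument as the paper: Lemma \ref{lem:setN} gives a uniform bound on $\|\M(t)^{-1}\|$, Lemma \ref{lem:WMconv} gives exponential decay of $L$, and the algebraic identity $\M^{-1}\W\W^\top\M^{-1}-\I_k=\M^{-1}(\W\W^\top-\M^2)\M^{-1}$ together with submultiplicativity yields the exponential decay of the $\calO$-defect. Your final paragraph goes one step further than the paper: the paper stops once it shows $\|\M(t)^{-1}\W(t)\W(t)^\top\M(t)^{-1}-\I_k\|\to 0$ and reads this directly as ``convergence to $\calO$,'' whereas you exhibit the explicit nearby point $(\W(t),(\W(t)\W(t)^\top)^{1/2})\in\calO$ and invoke Lipschitz continuity of the matrix square root on a spectrally bounded set to conclude $\dist((\W(t),\M(t)),\calO)\to 0$ in the ambient metric. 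That is a harmless and slightly more rigorous refinement, not a different route.
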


\begin{proof}
    By Lemma \ref{lem:setN}, $K:=\sup\{\|\M(t)^{-1}\|:t\ge0\}<\infty$. Thus,
    \begin{align*}
        \|\M(t)^{-1}\W(t)\W(t)^\top\M(t)^{-1}-\I_k\|^2&\le\|\M(t)^{-2}\|^2\|\W(t)\W(t)^\top-\M(t)^2\|^2\\
        &\le K^4e^{-8t}\|\W_0\W_0^\top-\M_0^2\|^2,
    \end{align*}
    where the first inequality is due to the Cauchy-Schwarz inequality, and the second inequality follows from Lemma \ref{lem:WMconv}. 
\end{proof}

\subsection{Convergence to equilibrium points}\label{sec:gradflow}

Having shown that $(\W(t),\M(t))$ converges to $\calO$ as $t\to\infty$, we now analyze the dynamics when $(\W(t),\M(t))$ is near the set $\calO$. To begin, consider the case $(\W_0,\M_0)\in\calO$ so that $(\W(t),\M(t))\in\calO$ for all $t\ge0$. Then we can rewrite the right-hand side of the ODE \eqref{eq:dW} as a function of $\W(t)$ only:

\begin{align}
\label{eq:dW1}
    \frac{\dd \W(t)}{\dd t}=2(\W(t)\W(t)^\top)^{-\frac12}\W(t)\A-2\W(t)=-2\nabla V(\W(t)),
\end{align}
where $V:\R^{k\times n}\mapsto\R$ is the nonconvex potential function
\begin{align}
\label{eq:V}
    V(\W):=f\left(\W,(\W\W^\top)^\frac12\right)=\tr\left[-(\W\W^\top)^{-\frac12}\W\A\W^\top+\frac12\W\W^\top\right],
\end{align}
where $f(\W,\M)$ is defined in \eqref{eq:minimax}.
Therefore, we can interpret the $\W(t)$ dynamics on $\calO$ as the gradient flow of the potential function $V$. (Note that $V$ is only differentiable on the subset of full-rank matrices $\W$ in $\R^{k\times n}$.) In Figure \ref{fig:gradflow}, we plot the vector field $-\nabla V(\W)$ in the case $d=2$ and $k=1$ to illustrate the dynamics of $\W(t)$ on the set $\calO$.

\begin{figure}
    \centering
    \includegraphics[width=.8\linewidth]{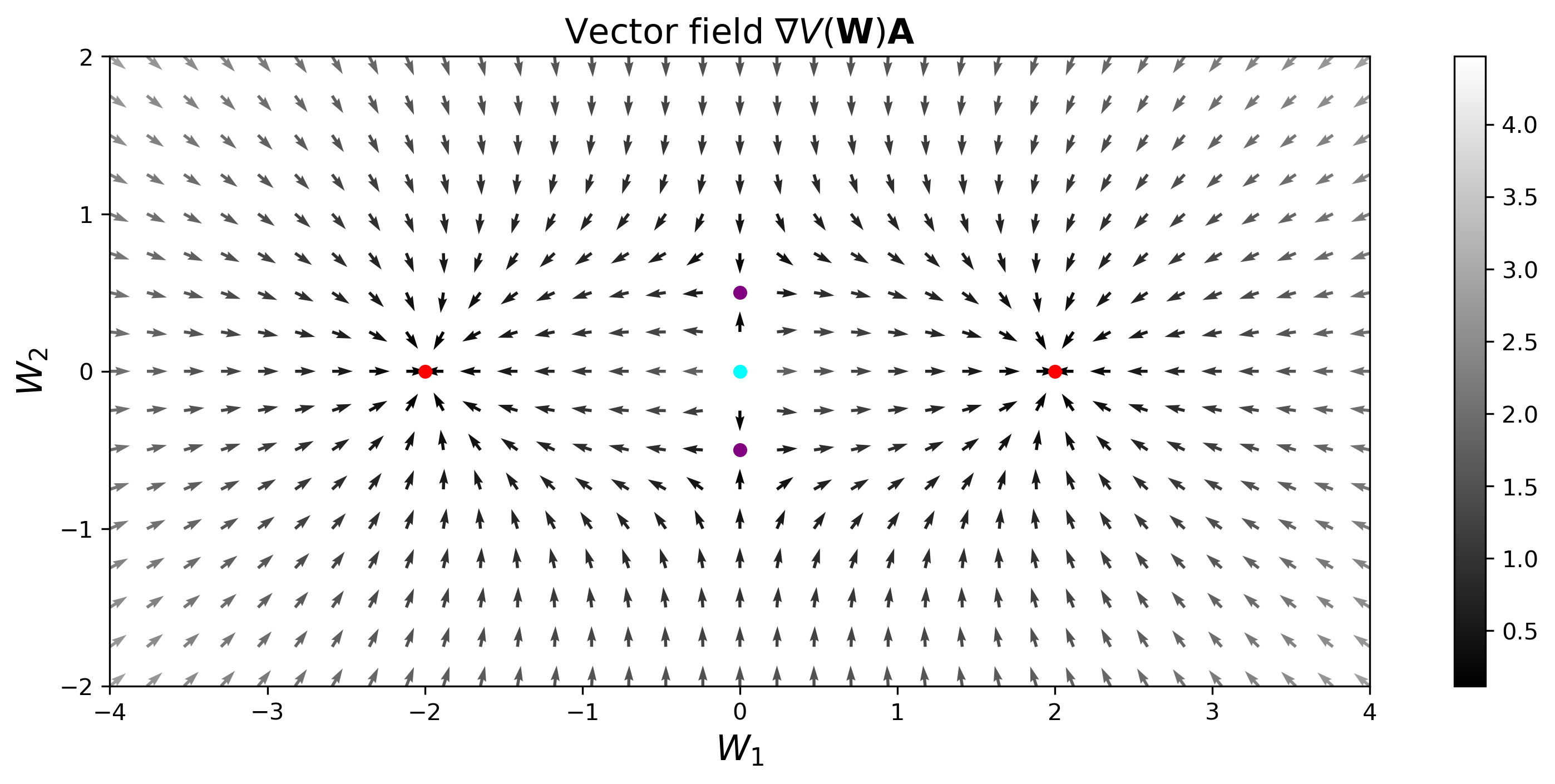}
    \caption{Plot of the vector field $-\nabla V(\W)$ in the case $n=2$, $k=1$ and $\A=\diag(2,\frac12)$. The grayscale indicates the vector magnitude. The red dots denote the global minima of $V$, the purple dots denote the saddle points of $V$, and the cyan dot at the origin denotes the set $\{\W:\det(\W\W^\top)=0\}$.}
    \label{fig:gradflow}
\end{figure}

\begin{lemma}\label{lem:V}
    Suppose $\A\in\calS_{++}^n$. The function $V$ is bounded below.  Furthermore, $\nabla V(\W)$ exists and satisfies $\nabla V(\W)=0$ if and only if $\W$ is full rank and $(\W,(\W\W)^{\frac12})\in\calE$.
\end{lemma}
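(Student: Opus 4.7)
The plan is to handle boundedness, differentiability, and the critical-point characterization separately, using the compact SVD $\W = \U\Sigma\V^\top$ (with $\U$ $k\times k$ orthogonal, $\Sigma = \diag(\sigma_1,\ldots,\sigma_k)$ having $\sigma_i>0$ for full-rank $\W$, and $\V^\top\V = \I_k$) as the key computational tool.

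First, for boundedness, I would use the SVD identity $(\W\W^\top)^{-1/2}\W = \U\V^\top$ to rewrite
\begin{align*}
V(\W) = -\tr(\V^\top\A\V\,\Sigma) + \frac12\tr(\Sigma^2) = \sum_{i=1}^k\Bigl[\frac12\sigma_i^2 - \sigma_i(\V^\top\A\V)_{ii}\Bigr].
\end{align*}
Completing the square in each $\sigma_i$ and using $(\V^\top\A\V)_{ii}\in[\lambda_n,\lambda_1]$ yields $V(\W)\ge -\frac12 k\lambda_1^2$; the bound extends to non-full-rank $\W$ by continuity of the equivalent expression $-\tr(\A(\W^\top\W)^{1/2}) + \frac12\|\W\|_F^2$.

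Differentiability of $V$ on the open set of full-rank matrices is then immediate: $\W\W^\top$ is positive definite there, and both matrix inversion and the principal square root are smooth on the positive-definite cone, so $V$ is $C^\infty$ and $\nabla V(\W)$ exists. For the critical-point characterization, I would work in SVD coordinates, where the displayed expression for $V$ depends only on $(\Sigma,\V)$ and not on $\U$. Critical points then satisfy (a) $\partial V/\partial\sigma_i = \sigma_i - (\V^\top\A\V)_{ii} = 0$ and (b) the Stiefel gradient condition $\A\V\Sigma = \V\cdot\operatorname{sym}(\V^\top\A\V\Sigma)$. Left-multiplying (b) by $\V^\top$ shows that $\V^\top\A\V$ commutes with $\Sigma$; combined with (a) and a short diagonalization argument within blocks where $\sigma_i$ is repeated, this forces the columns of $\V$ to be orthonormal eigenvectors of $\A$ with corresponding eigenvalues $\sigma_i$. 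That is precisely the SVD form of $\calE$ given by Lemma \ref{lem:criticalpts}, with $(\W\W^\top)^{1/2} = \U\Sigma\U^\top = \M_\ast$. Conversely, any $(\W,(\W\W^\top)^{1/2})\in\calE$ has this SVD form by Lemma \ref{lem:criticalpts} with positive $\sigma_i$ (since $\A\in\calS_{++}^n$), hence $\W$ is full-rank and equations (a) and (b) hold directly.

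The hard part will be unpacking the Stiefel gradient condition together with the case of repeated singular values or eigenvalues of $\A$. Working in SVD coordinates is what makes this tractable: the matrix square root in $V$ reduces to an entrywise operation on $\Sigma$, and the analysis becomes standard constrained optimization on a product of positive diagonals and the Stiefel manifold, so that reassembling $\W = \U\Sigma\V^\top$ from the critical $(\Sigma,\V)$ recovers the characterization in the lemma statement.
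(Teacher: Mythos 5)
Your boundedness argument parallels the paper's: both use the SVD $\W=\U\S\V^\top$, bound the Rayleigh quotients $(\V^\top\A\V)_{ii}$ by $\sigma_{\max}(\A)=\lambda_1$, and reach the same lower bound $-\frac{k}{2}\lambda_1^2$ (you complete the square in each $\sigma_i$; the paper is a bit more compressed). Your observation that $V$ admits the continuous extension $-\tr[\A(\W^\top\W)^{1/2}]+\frac12\|\W\|_F^2$ to all of $\R^{k\times n}$ is correct and a clean way to state boundedness.

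For the critical-point characterization you take a genuinely different route. The paper treats $\nabla V(\W)=0$ directly as a matrix equation on $\R^{k\times n}$, substitutes the SVD of $\W$, and reads off $\A\V=\V\S$ — no constrained-optimization machinery. You instead reparametrize $V$ as a function of $(\S,\V)$ on (positive diagonals) $\times$ (Stiefel manifold) and derive stationarity conditions (a) and (b) there. This reaches the same conclusion but is considerably more involved, and it creates a gap in your converse direction. The factorization map $(\U,\S,\V)\mapsto\U\S\V^\top$ has parameter dimension $\frac{k(k-1)}{2}+k+\left(nk-\frac{k(k+1)}{2}\right)=nk$, exactly matching $\dim\R^{k\times n}$; but at matrices with repeated singular values the gauge kernel is positive-dimensional, so the map is not a submersion there and stationarity in $(\S,\V)$-coordinates does not imply $\nabla V(\W)=0$. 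Such degeneracies do occur on $\calE$ whenever $\A$ has repeated eigenvalues among those selected by $\V$. The implication $\nabla V(\W)=0\Rightarrow$ (a),(b) is always safe by the chain rule, so your ``only if'' direction stands; but your ``if'' direction, which checks (a) and (b) at points of $\calE$ and stops, does not yet give $\nabla V(\W_\ast)=0$. The cleanest fix is what the paper does: for $(\W_\ast,\M_\ast)\in\calE$, Lemma \ref{lem:criticalpts} gives $\M_\ast=(\W_\ast\W_\ast^\top)^{1/2}$ and $\M_\ast^{-1}\W_\ast\A=\W_\ast$, and substituting these into the gradient expression verifies $\nabla V(\W_\ast)=0$ directly, for all $\A\in\calS_{++}^n$ including repeated eigenvalues.
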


\begin{proof}
    Let $\W\in\R^{k\times n}$ and let $\W=\U\S\V^\top$ denote its SVD. Then
    \begin{align*}
        V(\W) & \geq \tr\left[-\sigma_{\max}(\A)(\W\W^\top)^{\frac12}+\frac12\W\W^\top\right]\\
        &\ge-\sigma_{\max}(\A)\tr(\S)+\frac12\tr(\S^2)\\
        &\ge -\frac k2\sigma_{\max}^2(\A),
    \end{align*}
    where $\sigma_{\max}(\A)$ denotes the spectral norm of $\A$. Since this holds for all $\W$, the function $V$ is bounded below. Next, suppose $\W$ is full rank and $\nabla V(\W)=0$. Then $2(\W\W^\top)^{-\frac12}\W\A\W^\top=(\W\W^\top)^{\frac12}$ and so $\V^\top\A\V=\S$. Since $\S$ is diagonal, it follows that the column vectors of $\V$ are eigenvectors of $\A$ and the diagonal entries of $\S$ are the corresponding eigenvalues of $\A$. Thus, by Lemma \ref{lem:criticalpts}, $(\W,(\W\W^\top)^{\frac12})\in\calE$. The converse is readily verified by substitution.
\end{proof}

It follows from LaSalle's invariance principle \citep{lasalle1968stability}, the ODE \eqref{eq:dW1} and Lemma \ref{lem:V} that when initialized  with $(\W_0,\M_0)\in\calO$, the solution $(\W(t),\M(t))$ converges to a fixed point of $V$, which corresponds to an equilibrium point in $\calE$. Next, for the general case $(\W_0,\M_0)\in\calD\setminus\calN$, we can rewrite the right-hand side of the ODE \eqref{eq:dW} as follows:
\begin{align*}
    \frac{\dd \W(t)}{\dd t}&=-2\nabla V(\W(t))+\left[\M(t)^{-1}-(\W(t)\W(t)^\top)^{-\frac12}\right]\W(t)\A,
\end{align*}
where we recall that $\W(t)\W(t)^\top$ is non-singular for all $t\ge0$ by Lemma \ref{lem:setN}. By the chain rule,
\begin{align*}%\label{eq:dV}
    \frac{\dd V(\W(t))}{\dd t}&\le-2\|\nabla V(\W(t))\|^2+\|\nabla V(\W(t))\A\W(t)^\top\|\|\M(t)^{-1}-(\W(t)\W(t)^\top)^{-\frac12}\|.
\end{align*}
We claim that
\begin{align*}
    \limsup_{t\to\infty}\|\nabla V(\W(t))\A\W(t)^\top\|\|\M(t)^{-1}-(\W(t)\W(t)^\top)^{-\frac12}\|=0.
\end{align*}
Assuming the claim holds, we have
\begin{align*}
    \limsup_{t\to\infty}\frac{\dd V(\W(t))}{\dd t}\le0.
\end{align*}
Therefore, by LaSalle's invariance principle, $\W(t)$ converges to the set of fixed points of $V$, which correspond to the set of equilibrium points $\calE$. We summarize this result in the following lemma, and provide a detailed proof in appendix \ref{apdx:gradflow}.
    
\begin{lemma}\label{lem:globalconv}
    Suppose $\A\in\calS_{++}^n$ and $(\W(t),\M(t))$ is a solution to ODE \eqref{eq:dW}--\eqref{eq:dM} with  $\tau=\frac12$ and initial condition $(\W_0,\M_0)\in\calD\setminus\calN$. Then $(\W(t),\M(t))$ converges to the set $\calE$ as $t\to\infty$. 
\end{lemma}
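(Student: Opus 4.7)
The plan is to make rigorous the heuristic sketch given just before the lemma by combining the exponential convergence to $\calO$ with a Barbalat-type argument on $V(\W(t))$. First I would collect the uniform bounds needed: Lemma \ref{lem:setN} yields $\limsup_{t\to\infty}\{\|\M(t)^{-1}\|+\|\W(t)\|\}<\infty$, and combining this with $\|\W(t)\W(t)^\top-\M(t)^2\|\le Ce^{-4t}$ from Lemma \ref{lem:WMconv} also gives a uniform bound on $\|\M(t)\|$. Hence there exist $T_0<\infty$ and $c>0$ such that for all $t\ge T_0$ both $\M(t)^2$ and $\W(t)\W(t)^\top$ have spectra contained in $[c,1/c]$; in particular, $\W(t)$ eventually lies in a compact subset of the full-rank locus, so $V$, $\nabla V$, and $\nabla^2 V$ are bounded along the trajectory for $t\ge T_0$.

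Next, since $X\mapsto X^{-1/2}$ is Lipschitz on the set of symmetric matrices with spectrum in $[c,1/c]$, I would deduce
\begin{align*}
\|\M(t)^{-1}-(\W(t)\W(t)^\top)^{-1/2}\|\le C'\|\M(t)^2-\W(t)\W(t)^\top\|\le C''e^{-4t},\qquad t\ge T_0,
\end{align*}
which together with the boundedness of $\|\nabla V(\W(t))\A\W(t)^\top\|$ establishes the claim highlighted in the excerpt: the perturbation $\ep(t):=\|\nabla V(\W(t))\A\W(t)^\top\|\|\M(t)^{-1}-(\W(t)\W(t)^\top)^{-1/2}\|$ decays exponentially and in particular is integrable on $[T_0,\infty)$. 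Integrating the differential inequality $\frac{\dd}{\dd t}V(\W(t))\le-2\|\nabla V(\W(t))\|^2+\ep(t)$ and using that $V$ is bounded below by Lemma \ref{lem:V} then yields $\int_{T_0}^\infty\|\nabla V(\W(t))\|^2\,\dd t<\infty$.

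To upgrade this to $\|\nabla V(\W(t))\|\to 0$, I would invoke Barbalat's lemma rather than LaSalle's principle, since the $\W$-dynamics are non-autonomous through $\M(t)$: the ODE \eqref{eq:dW} bounds $\dot\W(t)$ on $[T_0,\infty)$, and together with the bound on $\nabla^2 V$ this forces $\|\nabla V(\W(t))\|^2$ to be uniformly continuous. Lemma \ref{lem:V} then identifies any $\omega$-limit point of $\W(t)$ as a full-rank critical point of $V$, corresponding to an equilibrium in $\calE$; combined with Corollary \ref{cor:MWinv} (which forces $\M(t)-(\W(t)\W(t)^\top)^{1/2}\to 0$), this shows $\dist((\W(t),\M(t)),\calE)\to 0$. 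The main technical obstacle is establishing the uniform lower bound $c$ on the spectrum of $\W(t)\W(t)^\top$ for large $t$: this lower bound is what makes the matrix square root inverse Lipschitz along the trajectory and gives the exponential estimate on the perturbation, and it relies essentially on having excluded the null set $\calN$ in the hypothesis via Lemma \ref{lem:setN}.
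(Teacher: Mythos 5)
Your proposal is correct and follows essentially the same strategy as the paper's proof in Appendix \ref{apdx:gradflow}: decompose $\frac{\dd}{\dd t}V(\W(t))$ into the gradient-flow term $-2\|\nabla V(\W(t))\|^2$ plus a perturbation $\ep(t)$ controlled by $\|\M(t)^{-1}-(\W(t)\W(t)^\top)^{-1/2}\|$, use Lemmas \ref{lem:WMconv} and \ref{lem:setN} to show the perturbation decays, and conclude convergence to the critical set of $V$, which Lemma \ref{lem:V} identifies with $\calE$.

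The one genuine (and arguably improving) difference is the closing step. The paper cites LaSalle's invariance principle, but since $V(\W)$ is not monotone along trajectories---its decrease is offset by $\ep(t)$---LaSalle does not apply to $V$ off the nose; one has to pass through an $\omega$-limit-set argument for the asymptotically autonomous flow, or use an auxiliary quantity like $V(\W(t))+\int_t^\infty\ep(s)\,\dd s$. Your Barbalat-style argument makes this step explicit and self-contained: from the integrability of $\ep$ you deduce $\int_{T_0}^\infty\|\nabla V(\W(t))\|^2\,\dd t<\infty$, then use uniform continuity of $\|\nabla V(\W(t))\|^2$ (from boundedness of $\dot\W$ and of $\nabla^2 V$ on the eventual compact set) to get $\nabla V(\W(t))\to 0$. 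Note that your Lipschitz estimate for $X\mapsto X^{-1/2}$ delivers exponential decay of $\ep$, which is stronger than what the paper records ($\ep(t)\to 0$) and is exactly what makes $\ep$ integrable as your argument needs. Your diagnosis of the key technical obstacle---a uniform lower bound on the spectrum of $\W(t)\W(t)^\top$ for $t\ge T_0$, coming from $\limsup_t\|\M(t)^{-1}\|<\infty$ together with $\|\W(t)\W(t)^\top-\M(t)^2\|\to 0$---is accurate, and it is also the point the paper passes over most quickly.
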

 
\subsection{Proof of Theorem \ref{thm:aeconvergence}}\label{sec:proof}

Existence and uniqueness of solutions is shown in Theorem \ref{thm:exist} of appendix \ref{apdx:exist}. We now combine Lemma \ref{lem:globalconv} and Theorem \ref{thm:localstability} to prove global convergence. Define the subset
\begin{align*}
    \calU:=\left\{(\W_0,\M_0)\in\calD:(\W(t),\M(t))\text{ converges to $\calE\setminus\calE_0$ as $t\to\infty$}\right\}
\end{align*}
of initializations whose trajectories converge to the set of linearly unstable equilibrium points. If $(\W_\ast,\M_\ast)\in\calE\setminus\calE_0$, then by Theorem \ref{thm:localstability}, the Jacobian of $\G$ evaluated at $(\W_\ast,\M_\ast)$ has an eigenvalue with positive real part. Therefore, by \cite[Proposition 3]{potrie2009local}, the set $\calU$ has Lebesgue measure zero. Along with Lemma \ref{lem:setN}, this implies that the set $\calZ:=\calN\cup\calU$ also has Lebesgue measure zero. Suppose $(\W(t),\M(t))$ is a solution of the ODE \eqref{eq:dW}--\eqref{eq:dM} with initial condition $(\W_0,\M_0)\in\calD\setminus\calZ$. Then by Lemma \ref{lem:globalconv} and the definition of $\calU$, $(\W(t),\M(t))$ converges to the set $\calE_0$ as $t\to\infty$. Finally, by Lemma \ref{lem:criticalpts} and the definition of $\calE_0$, for every $(\W_\ast,\M_\ast)\in\calE_0$, the row vectors of $\F_\ast:=\M_\ast^{-1}\W_\ast$ are orthonormal and span the principal subspace of $\A$.

\section{Comparing the ODE and the online algorithm}
\label{sec:stochastic}

We now use numerical simulations to examine whether the two-phase convergence observed in the ODE \eqref{eq:dW}--\eqref{eq:dM} also appears in the online algorithm \eqref{eq:DW}--\eqref{eq:DM}.

\paragraph{Setup.} We consider a network with $d=4$ inputs and $k=2$ neurons.
The inputs $\{\x_t\}$ are sampled i.i.d.\ from a mean zero normal distribution with covariance matrix $\A=\text{diag}(0.5,0.25,0.2,0.05)$.
The feedforward weight matrix $\W$ is initialized to have i.i.d.\ standard normal entries and the recurrent weight matrix $\M$ is initialized to be diagonal with entries sampled uniformly from $[1,2]$.
The ODE \eqref{eq:dW}--\eqref{eq:dM} is solved using the built-in Mathematica function \texttt{NDSolve}.
The online algorithm \eqref{eq:DW}--\eqref{eq:DM} is simulated with time-dependent step size $\eta_t=\frac{c_0}{c_1+t}$, where $c_0,c_1>0$ are chosen so that $\eta_1=0.001$ and $\sum_{t=1}^{25,000}\eta_t=8$.
To track convergence, we evaluate the Lyapunov functions $L(\W,\M)$ and $V_\ast(\W)=V(\W)-V(\W_\ast)$, where $(\W_\ast,\M_\ast)$ is any stable critical point of the ODE.
The ODE and online algorithm are simulated with 100 random initializations.

\begin{figure}
    \centering
    \includegraphics[width=1.0
    \linewidth]{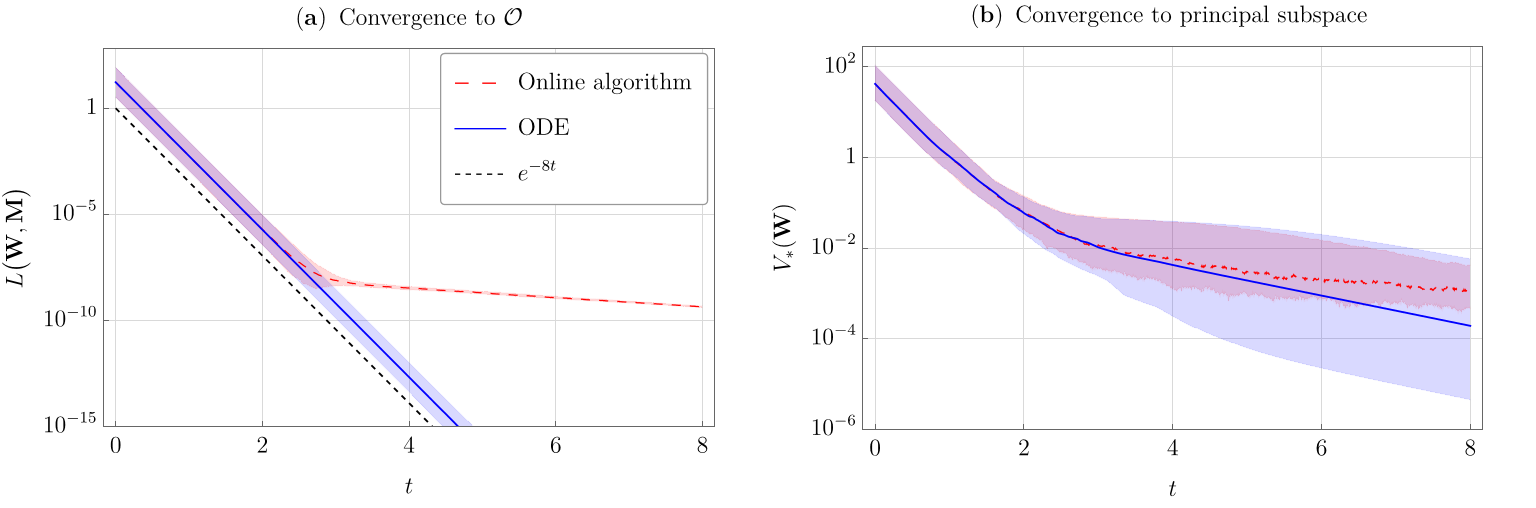}
    \caption{
    Convergence of the ODE and the online algorithm in a network with $d=4$ inputs and $k=2$ neurons. {\bf (a)} Convergence to the invariant manifold $\calO$, measured using the Lyapunov function $L(\W,\M)$. 
    {\bf (b)} Convergence to the principal subspace, measured using $V_\ast(\W)=V(\W)-V(\W_\ast)$, where $(\W_\ast,\M_\ast)$ is any stable equilibrium point of the ODE.
    In panels \textbf{(a)} and \textbf{(b)}, shaded regions indicate the middle 80th percentile over 100 random initializations; solid/dotted lines denote the median values.
    Simulation details are in the main text.
    }
    \label{fig:randomstart}
\end{figure}

\paragraph{Results.} Figure \ref{fig:randomstart} illustrates the two-phase convergence for both the ODE and the online algorithm.
Panel \textbf{(a)} shows that $L(\W,\M)$ converges exponentially with decay rate $e^{-8t}$ (dotted black line) for the ODE (solid blue line), consistent with Lemma \ref{lem:WMconv}.
For the online algorithm (dashed red line), $L(\W,\M)$ initially converges exponentially with decay rate $e^{-8t}$ before leveling off (around $t=2.5$) due to stochastic fluctuations.
Panel \textbf{(b)} shows convergence to the principal subspace via decay of $V_\ast(\W)$ for the ODE and online algorithm.
After the trajectories reach a neighborhood of $\calO$ around $t=2$, the decay rate of $V_\ast(\W)$ for both the ODE and online algorithm continues more gradually as the dynamics of $\W$ are dominated by the gradient $\nabla V(\W)$.
Overall, these simulations suggest that the online algorithm also evolves according to the same two phases as the ODE.

\section{Global stability for general synaptic learning rates}
\label{sec:tau}

Our analysis focused on the special case $\tau=\frac12$, where feedforward and recurrent synapses evolve at equal rates.
The choice simplified the structure of the invariant manifold $\calO$ and Lyapunov function $L$, enabling a transparent proof of global convergence.
However, the assumption $\tau=\frac12$ may appear restrictive and raises the question of whether the same two-phase convergence holds for other $0<\tau\le\frac12$.

\citet{centorrino2025similarity} recently established global stability in the limit $\tau\to0$, where the recurrent synapses evolve on a separate timescale than the feedforward synapses.
In this regime, the synaptic dynamics also exhibit a two-phase structure, but with separate timescales.
In the first phase, the feedforward weights $\W$ are fixed while the recurrent weights $\M$ evolve towards the invariant manifold $\calO_0=\{(\W,\M)\in\calD:\M^{-1}\W\A\W^\top\M^{-1}=\M\}$.
As shown in appendix \ref{apdx:lyapunov}, this convergence can be described by the convex Lyapunov function $L_0(\W,\M)=\|(\W\A\W^\top)^2-\M^3\|^2$.
In the second phase, as shown in \citep[section 5.3]{centorrino2025similarity}, the synaptic matrices $(\W,\M)$ evolve within the invariant manifold $\calO_0$ and the feedforward synapses $\W$ follow the gradient flow of the non-convex potential function $V_0(\W)=\frac32\|(\W\A\W^\top)^{\frac13}\|^2-\|\W\|^2$. 
Notably, both the invariant manifold $\calO_0$ and functions $L_0(\W,\M)$ and $V_0(\W)$ are distinct from those analyzed here for the case that $\tau=\frac12$.

We conjecture that global convergence holds for all $0<\tau\le\frac12$, with dynamics exhibiting a similar two-phase structure: (i) rapid convergence to a generalized invariant manifold $\calO_\tau$ with convergence described by a generalized convex Lyapunov function $L_\tau(\W,\M)$, and (ii) slower convergence along this manifold to the principal subspace that follows the gradient flow of a generalized non-convex potential function $V_\tau(\W)$.
Moreover, we conjecture that for each $0\le\tau\le\frac12$, there is a function $\Phi_\tau:\R^{k\times n}\to\calS_{+}^k$ mapping feedforward weights to recurrent weights and $p_\tau>0$ such that 
\begin{itemize}
    \item the invariant manifold satisfies $\calO_\tau=\{(\W,\M)\in\calD:\Phi_\tau(\W)=\M\}$,
    \item the convex Lyapunov function satisfies $L_\tau(\W,\M)=\|\Phi_\tau(\W)^{p_\tau}-\M^{p_\tau}\|^2$,
    \item and the potential function is $V_\tau(\W)=f(\W,\Phi_\tau(\W))$, where $f(\W,\M)$ is defined in \eqref{eq:minimax}.
\end{itemize}
Furthermore, $\Phi_\tau$ and $p_\tau$ satisfy $\Phi_{\frac12}(\W)=(\W\W^\top)^{\frac12}$ and $p_{\frac12}=2$, and $\lim_{\tau\to0}\Phi_\tau(\W)=(\W\A\W^\top)^{\frac13}$ and $\lim_{\tau\to0}p_\tau=3$.
We anticipate that the conjecture can be proved for $\tau$ in a small neighborhood of $\frac12$ using an application of the following implicit function theorem for dynamical systems (Persistence of Normally Hyperbolic Invariant Manifolds under Perturbations \citep{HirschPughShub1977}). The challenge is extending the proof beyond a neighborhood of $\tau=\frac12$.
In appendix \ref{apdx:invariant}, we provide a numerical method for estimating the invariant manifolds $\calO_\tau$ and illustrate the manifolds in the scalar setting $n=k=1$.

\section{Discussion}

We proved global convergence of solutions to the ODE \eqref{eq:dW}--\eqref{eq:dM} when $\tau=\frac12$. Our analysis revealed a two-phase structure to the convergence: (1) rapid convergence to an invariant set where the neural filters are orthonormal, and (2) slower evolution along this manifold following the gradient of a non-convex potential function whose minima correspond to the principal subspace. This result provides a rigorous link between local Hebbian/anti-Hebbian interactions and stable network-level computations. Practically, the results suggest that initializing synaptic weights on the invariant manifold (e.g., setting $\M_0=\I_k$ and $\W_0$ to have orthonormal row vectors) could accelerate convergence.

This work is an important step towards proving convergence rate guarantees for the online algorithm analogous to the results established by \citet{chou2020ode} for Oja's PCA model of a neuron. The analysis may also provide insight into the global dynamics of related algorithms that can be implemented in neural networks with local Hebbian synaptic learning rules for solving non-negative matrix factorization problem \citep{pehlevan2019neuroscience} and networks with local non-Hebbian synaptic learning rules for solving symmetric generalized eigen-subspace problems such as canonical correlation analysis \citep{lipshutz2021biologically,lipshutz2023normative}.

While the assumption $\tau=\frac12$ simplifies the analysis and enables closed-form Lyapunov functions, it is also restrictive and may not be biologically realistic.
Future work should extend the proof of global stability to the range $0<\tau\le\frac12$, as supported by numerical evidence (section \ref{sec:tau} and appendix \ref{apdx:tau}) and analytical results for the regime $\tau\to0$ \citep{centorrino2025similarity}.

\appendix
\section{Existence and uniqueness of solutions}\label{apdx:exist}

In this section, we prove existence and uniqueness of solutions to the ODE \eqref{eq:dW}--\eqref{eq:dM} for any $\tau>0$. A solution of the ODE is a continuously differentiable function $t\mapsto(\W(t),\M(t))$ from $[0,\infty)$ to $\calD$ whose derivative satisfies equations \eqref{eq:dW}--\eqref{eq:dM}. Recall that $\|\cdot\|$ denotes the Frobenius norm.

\begin{theorem}\label{thm:exist}
    Suppose $\A\in\calS_{++}^n$. For any $\tau>0$ and $(\W_0,\M_0)\in\calD$, there exists a unique solution $(\W(t),\M(t))$ to the ODE \eqref{eq:dW}--\eqref{eq:dM} with initial condition $(\W_0,\M_0)$.
\end{theorem}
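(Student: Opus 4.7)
The plan is to combine local existence and uniqueness (Picard-Lindel\"of) with three a priori estimates that keep the flow inside a compact subset of $\calD$ on every bounded time interval. Since $\M\mapsto\M^{-1}$ is $C^\infty$ on $\calS_{++}^k$, the vector field $\G$ is smooth on the open set $\calD$, so around each $(\W_0,\M_0)\in\calD$ there is a unique local solution; let $[0,T^\ast)$ denote its maximal interval of existence. By the standard continuation principle for ODEs on open sets, if $T^\ast<\infty$ then $(\W(t),\M(t))$ must leave every compact subset of $\calD$ as $t\uparrow T^\ast$, which amounts to either $\|\W(t)\|+\|\M(t)\|\to\infty$ or $\lambda_{\min}(\M(t))\downarrow 0$. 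It suffices to rule both out on any subinterval $[0,T]\subset[0,T^\ast)$.

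The first estimate prevents $\M$ from becoming singular. For any fixed unit vector $\v\in\R^k$, the ODE \eqref{eq:dM} gives
\begin{align*}
\tau\frac{d}{dt}\bigl(\v^\top\M(t)\v\bigr)=\v^\top\M(t)^{-1}\W(t)\A\W(t)^\top\M(t)^{-1}\v-\v^\top\M(t)\v\ge-\v^\top\M(t)\v,
\end{align*}
where the inequality uses $\A\succ0$ so that $\M^{-1}\W\A\W^\top\M^{-1}$ is positive semidefinite. Gronwall's lemma yields $\v^\top\M(t)\v\ge(\v^\top\M_0\v)\,e^{-t/\tau}$; taking an infimum over unit $\v$ gives the key bound $\lambda_{\min}(\M(t))\ge\lambda_{\min}(\M_0)\,e^{-t/\tau}$ on $[0,T^\ast)$, so in particular $\sigma_{\max}(\M(t)^{-1})$ is bounded on $[0,T]$.

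The remaining estimates control the norms via the scalar ODEs $\frac{d}{dt}\|\W\|^2=4\tr(\M^{-1}\W\A\W^\top)-4\|\W\|^2$ and $\tau\frac{d}{dt}\|\M\|^2=2\tr(\M^{-1}\W\A\W^\top)-2\|\M\|^2$, obtained by pairing \eqref{eq:dW} with $\W^\top$ and \eqref{eq:dM} with $\M$ and taking traces. Using $\tr(\M^{-1}\W\A\W^\top)\le\sigma_{\max}(\M^{-1})\sigma_{\max}(\A)\|\W\|^2$ (a consequence of $\tr(\B\C)\le\sigma_{\max}(\B)\tr(\C)$ for positive semidefinite $\B,\C$), the first scalar ODE reads $\frac{d}{dt}\|\W\|^2\le\bigl(4\sigma_{\max}(\A)\sigma_{\max}(\M(t)^{-1})-4\bigr)\|\W\|^2$, whose coefficient is bounded on $[0,T]$ by the first estimate, so Gronwall delivers a finite bound on $\|\W(t)\|$. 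Plugging these two bounds into the second scalar ODE makes its right-hand side integrable on $[0,T]$ and produces a finite bound on $\|\M(t)\|$. Together with the eigenvalue lower bound, this contradicts the continuation principle, so $T^\ast=\infty$.

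The main obstacle is the first estimate: the dissipative lower bound on $\v^\top\M(t)\v$ depends essentially on the positive semidefiniteness of the driving term $\M^{-1}\W\A\W^\top\M^{-1}$, which in turn uses $\A\succ0$. Without this, $\lambda_{\min}(\M(t))$ could in principle reach zero in finite time and the flow would leave $\calD$. The resulting bound deteriorates exponentially in $t$, which is adequate for existence on $[0,\infty)$ but too weak to be uniform in $t$; the stronger, uniform control away from the null set $\calN$ is the subject of Lemma \ref{lem:setN}.
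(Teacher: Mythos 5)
Your proof is correct and takes essentially the same route as the paper's: local existence and uniqueness from smoothness of $\G$ on the open set $\calD$, the key lower bound $\lambda_{\min}(\M(t))\ge\lambda_{\min}(\M_0)e^{-t/\tau}$ obtained from the $\v^\top\M(t)\v$ Gronwall estimate using $\A\succ0$, and then a Gronwall bound on $\|\W(t)\|$ on finite intervals. The one small difference is that you also derive an explicit a priori bound on $\|\M(t)\|$, whereas the paper defines $\calD_K$ via $\|\M^{-1}\|+\|\W\|$ alone and relies on uniform Lipschitz continuity of $\G$ on $\calD_K$ to rule out finite-time blowup of $\|\M\|$; your version is slightly more self-contained in this respect.
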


\begin{proof}
For each $K<\infty$, define the set 
    $$\calD_K:=\left\{(\W,\M)\in\calD:\|\M^{-1}\|+\|\W\|<K\right\}.$$
Since $\G$ is analytic on $\calD$ and, for $K<\infty$, the closure of $\calD_K$ is compact, it follows that $\G$ is uniformly Lipschitz continuous on $\calD_K$. Let $(\W_0,\M_0)\in\calD$. Then for each $K<\infty$ sufficiently large such that $(\W_0,\M_0)\in\calD_K$, there exists a unique solution $(\W(t),\M(t))$ of the ODE \eqref{eq:dW}--\eqref{eq:dM} on the interval $[0,T_K)$, where
    $$T_K:=\inf\left\{t\ge0:\|\M(t)^{-1}\|+\|\W(t)\|\ge K\right\}$$
is the first time $(\W(t),\M(t))$ exits the set $\calD_K$. We are left to show that $T_\infty:=\lim_{K\to\infty}T_K=\infty$.

Let $\v\in\R^k$ be an arbitrary unit vector and define $a(t):=\v^\top\M(t)\v$ for all $t\in[0,T_\infty)$. Then
\begin{align*}
    \tau\frac{\dd a(t)}{\dd t}=\v^\top\M(t)^{-1}\W(t)\A\W(t)^\top\M(t)^{-1}\v-a(t)\ge-a(t),
\end{align*}
and so $a(t)\ge a(0)e^{-t/\tau}$ for all $t\in[0,T_\infty)$. Since this holds for all unit vectors $\v\in\R^k$, we have $\sigma_{\min}(\M(t))\ge me^{-t/\tau}>0$ for all $t\in[0,T_\infty)$, where $m:=\sigma_{\min}(\M_0)$ and $\sigma_{\min}(\M)>0$ denotes the smallest eigenvalue of $\M$. Therefore, 
\begin{align}\label{eq:Minvbound}
    \|\M(t)^{-1}\|=\sqrt{\tr(\M(t)^{-2})}\le \frac{\sqrt{k}}{\sigma_{\min}(\M(t))}\le \frac{\sqrt{k}}{m}e^{t/\tau},\qquad t\in[0,T_\infty).
\end{align}
Next, we have, for all $t\in[0,T_\infty)$,
\begin{align*}
    \left\|\frac{\dd\W(t)}{\dd t}\right\|&\le 2\|\M(t)^{-1}\|\|\W(t)\|\|\A\|+2\|\W(t)\|\\
    &\le 2\left(\frac{\sqrt{k}}{m}\|\A\|e^{t/\tau}+1\right)\|\W(t)\|.
\end{align*}
By Gronwall's inequality,
\begin{align}\label{eq:Wbound}
    \|\W(t)\|\le\|\W_0\|\exp\left[2\left(\frac{\sqrt{k}}{m}\|\A\|e^{t/\tau}+1\right)t\right],\qquad t\in[0,T_\infty).
\end{align}
Let $T<\infty$ be arbitrary. Setting
\begin{align*}
    K:=\max\left\{\frac{2\sqrt{k}}{m}e^{T/\tau},2\|\W_0\|\exp\left[2\left(\frac{\sqrt{k}}{m}\|\A\|e^{T/\tau}+1\right)T\right]\right\}<\infty,
\end{align*}
it follows from equations \eqref{eq:Minvbound}--\eqref{eq:Wbound} that $T_K\ge T$. Therefore, $T_\infty=\lim_{K\to\infty}T_K=\infty$.
\end{proof}

\section{Characterizing the critical points}\label{apdx:criticalpts}

In this section, we prove Lemma \ref{lem:criticalpts}, which characterizes the critical points of the ODE \eqref{eq:dW}--\eqref{eq:dM}.

\begin{proof}[Proof of Lemma \ref{lem:criticalpts}]
    First suppose $(\W_\ast,\M_\ast)\in\calE$. 
    Setting the derivative in \eqref{eq:dM} to zero, we see that $(\W_\ast,\M_\ast)$ satisfy
    \begin{align*}
        \M_\ast\inv\W_\ast\A\W_\ast^\top\M_\ast\inv&=\M_\ast.
    \end{align*}
    After left- and right-multiplying on both sides of the equality by $\M_\ast$, we obtain the relation $\M_\ast^3=\W_\ast\A\W_\ast^\top$.
    Taking cube roots yields $\M_\ast=(\W_\ast\A\W_\ast^\top)^{\frac13}$.
    Next, setting the derivative in \eqref{eq:dW} to zero, we obtain
    \begin{align*}
        \M_\ast\inv\W_\ast\A=\W_\ast.
    \end{align*}
    After right-multiplying by $\W_\ast^\top$ on both sides of the equality and substituting in with $\M_\ast\inv=(\W_\ast\A\W_\ast^\top)^{-\frac13}$, we obtain $(\W_\ast\A\W_\ast^\top)^{\frac23}=\W_\ast\W_\ast^\top$.
    Substituting in with the SVD $\W_\ast=\U\S\V^\top$ yields $\V^\top\A\V=\S$. Since $\S$ is diagonal, it follows that the column vectors of $\V$ are eigenvectors of $\A$ and the diagonal elements of $\S$ are the corresponding eigenvalues. Finally, we see that $\M_\ast=\U\S\U^\top$. On the other hand, if $\W_\ast$ and $\M_\ast$ are as in the statement of the lemma, then it is readily verified by substitution that $\G(\W_\ast,\M_\ast)={\bf 0}$.
\end{proof}

\section{Bounding the synaptic weights}\label{apdx:setN}

In this section, we prove Lemma \ref{lem:setN}, which states that the set $\calN$ of initial conditions with divergent trajectories, has Lebesgue measure zero. 
Through the remainder of the Appendices we fix $\tau=\tfrac12$.
We first need the following lemma.

\begin{lemma}\label{lem:detM}
    Suppose $\A\in\calS_{++}^n$ and $(\W(t),\M(t))$ is a solution of the ODE \eqref{eq:dW}--\eqref{eq:dM} with initial condition
    $(\W_0(t),\M_0(t))\in\calD\setminus\calN$. Then $\liminf_{t\to\infty}\det(\M(t))>0$ and $\limsup_{t\to\infty}\|\M(t)\|<\infty$. 
\end{lemma}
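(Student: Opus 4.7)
The plan is to work from a matrix-valued strengthening of Lemma \ref{lem:WMconv}: the same computation used in its proof actually shows that $\X(t) := \W(t)\W(t)^\top - \M(t)^2$ satisfies the linear matrix ODE $\dot\X = -4\X$, so
\begin{align*}
\W(t)\W(t)^\top - \M(t)^2 = \C_0\, e^{-4t}, \qquad \C_0 := \W_0\W_0^\top - \M_0^2.
\end{align*}
Taking traces ties the Frobenius norms via $\tr(\M(t)^2) = \|\W(t)\|^2 - \tr(\C_0)\,e^{-4t}$. I will also invoke the a priori exponential bound $\|\M(t)^{-1}\| \le \sqrt{k}/\sigma_{\min}(\M_0)\cdot e^{2t}$ extracted from the proof of Theorem \ref{thm:exist}, which is used to control $e^{-4t}$-small correction terms.

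For the upper bound $\limsup_{t\to\infty}\|\M(t)\| < \infty$, I would bound $u(t) := \|\W(t)\|^2$, whose ODE reads $\dot u = -4u + 4\tr(\W\A\W^\top\M^{-1})$. Using $\A \preceq \lambda_1\I_n$ together with the matrix identity,
\begin{align*}
\tr(\W\A\W^\top\M^{-1}) \le \lambda_1\bigl[\tr(\M) + e^{-4t}\tr(\M^{-1}\C_0)\bigr],
\end{align*}
where $\tr(\M) \le \sqrt{k\,(u + |\tr(\C_0)|)}$ from the trace identity, and the correction term is $o(1)$ by the a priori bound on $\|\M^{-1}\|$. The result is a logistic-type inequality $\dot u \le -4u + 4\sqrt{k}\,\lambda_1\sqrt{u} + o(1)$, which is stable outside a ball of radius $\sqrt{k}\,\lambda_1$; hence $\limsup u(t) < \infty$, and via the trace identity $\limsup\|\M(t)\| < \infty$.

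For the lower bound $\liminf_{t\to\infty}\det(\M(t)) > 0$, I would track the smallest eigenvalue $\mu_k(t)$ of $\M(t)$ together with a unit eigenvector $\q_k(t)$. Hadamard's first-variation formula applied to $\dot\M = 2(\F\A\F^\top - \M)$ with $\F := \M^{-1}\W$ and $\M^{-1}\q_k = \q_k/\mu_k$ gives $\dot\mu_k = (2/\mu_k^2)\|\A^{1/2}\W^\top\q_k\|^2 - 2\mu_k$. The matrix identity rewrites $\|\W^\top\q_k\|^2 = \mu_k^2 + e^{-4t}\q_k^\top\C_0\q_k$, and then $\A \succeq \lambda_n\I_n$ yields the key differential inequality
\begin{align*}
\dot\mu_k \ge 2\lambda_n - 2\mu_k + \frac{2\lambda_n\,e^{-4t}\,\q_k^\top\C_0\q_k}{\mu_k^2}.
\end{align*}
If the correction term is controlled, $\mu_k$ is driven exponentially toward $\lambda_n$ and $\liminf\mu_k \ge \min(\lambda_n,\mu_k(0)) > 0$; combined with the upper bound on $\mu_1(t) \le \|\M(t)\|$ from the previous step, this yields $\liminf\det(\M(t)) > 0$.

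The main obstacle is the correction term, which can be negative and apparently blows up like $1/\mu_k^2$, threatening to swamp the repulsive $2\lambda_n$ exactly when $\mu_k$ is small. Here the hypothesis $(\W_0,\M_0) \notin \calN$ enters essentially: the conditions defining $\calN$ are self-replicating under the ODE---a direct check shows $\frac{\dd}{\dd t}(\W^\top\v) = 2\A\W^\top\M^{-1}\v - 2\W^\top\v = 0$ and $\dot\lambda = -2\lambda$ along any trajectory with $\W^\top\v = 0$ and $\M\v = \lambda\v$---so by uniqueness of solutions $\calD\setminus\calN$ is forward invariant, forcing $\|\W(t)^\top\q_k(t)\|^2 > 0$ for every $t\ge 0$. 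To make this quantitative I would bootstrap using the positive-semidefiniteness constraint $\|\W^\top\q_k\|^2 \ge 0$ (which forces $\mu_k^2 \ge -e^{-4t}\q_k^\top\C_0\q_k$ and thereby bounds the correction in magnitude by $2\lambda_n$ automatically), together with the a priori bound $\mu_k(t) \ge \sigma_{\min}(\M_0)e^{-2t}$ and an iterative improvement that turns the strict non-vanishing of $\|\W^\top\q_k\|$ into a uniform quantitative gap. Rigorously executing this bootstrap to obtain the sharp positive $\liminf$ is the most delicate step of the argument.
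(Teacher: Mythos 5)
Your strengthening $\W(t)\W(t)^\top-\M(t)^2=\C_0 e^{-4t}$ (with $\C_0:=\W_0\W_0^\top-\M_0^2$) is correct --- it follows by symmetrizing the first display in the proof of Lemma~\ref{lem:WMconv} --- and your upper-bound argument via the scalar $u(t)=\|\W(t)\|^2$ is sound. It is a slightly different route from the paper, which works directly with $\frac{\dd}{\dd t}\|\M\|^2$ and a contradiction, but both rest on the same inputs (the trace of the matrix identity plus the a priori exponential bound $\|\M(t)^{-1}\|\le\frac{\sqrt k}{m}e^{2t}$), and yours is fine.

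The lower bound is where the proposal has a genuine gap. Setting aside the technical point that $\mu_k(t)$ is only Lipschitz and the first-variation formula holds only where the smallest eigenvalue is simple (this is repairable with Dini derivatives), the substantive problem is the one you flag yourself: the constraint $\mu_k^2+e^{-4t}\q_k^\top\C_0\q_k=\|\W^\top\q_k\|^2\ge0$ is automatic, and once you use it your inequality collapses to $\dot\mu_k\ge-2\mu_k$, which is perfectly compatible with $\mu_k(t)\to0$. The hypothesis $(\W,\M)\notin\calN$ gives $\|\W(t)^\top\q_k(t)\|>0$ pointwise but supplies no quantitative floor, and the bootstrap you sketch to produce one is precisely the missing step, not a routine detail. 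The paper avoids this entirely by arguing via $\log\det(\W\W^\top)$ rather than eigenvalue trajectories: if $\det\M(t_j)\to0$ along a sequence where $\det(\W\W^\top)$ is decreasing, then $\tr(\M(t_j)^{-1})\ge k\,\det(\M(t_j))^{-1/k}\to\infty$ by AM-GM, while a direct computation gives $\frac{\dd}{\dd t}\log\det(\W\W^\top)\ge c\lb\sigma_{\min}(\A)\tr(\M^{-1})-k\rb>0$ for $j$ large, a contradiction; the identity $\W\W^\top-\M^2=\C_0e^{-4t}$ ties $\det(\W\W^\top)\to0$ to $\det\M\to0$. This contradiction never requires tracking eigenvectors or converting $\calN$-avoidance into a uniform bound, and I would recommend replacing the eigenvalue-tracking step with it.
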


\begin{proof}
    For a proof by contradiction, suppose $\liminf_{t\to\infty}\det(\M(t))=0$.
    Then, along with Lemma \ref{lem:WMconv}, this implies there is a sequence $\{t_j\}$ with $t_j\to\infty$ as $j\to\infty$ such that
        $$\lim_{j\to\infty}\det(\M(t_j)^2)=\lim_{j\to\infty}\det (\W(t_j)\W(t_j)^\top)=0$$
    and for each $j=1,2,\dots$,
    \begin{align*}
        -\infty<\frac{\dd}{\dd t}\log\det(\W(t_j)\W(t_j)^\top))<0.
    \end{align*}
    Since $(\W, \M)\notin \calN$, we can choose $j$ sufficiently large such that 
    \begin{align}\label{eq:contradiction}
        \frac{k}{\sigma_{\min}(\A)}<\tr\left[\M(t_j)\inv\right]<\infty.
    \end{align}
    However, by the ODE \eqref{eq:dW}, 
    \begin{align*}
        \frac{\dd}{\dd t}\log(\det(\W(t_j)\W(t_j)^\top)) & =\tr\left[\left(\W(t_j)\W(t_j)^\top\right)\inv \frac{\dd\W(t_j)\W(t_j)^\top}{\dd t}\right] \\
        % & = 2\tr\left[\left(\W(t)\W(t)^\top\right)\inv\left(\M(t)\inv\W(t)\A\W(t)^\top-\W(t)\W(t)^\top\right)\right] \\
        & = 2\tr\left[\left(\W(t_j)\W(t_j)^\top\right)\inv\M(t_j)\inv\W(t_j)\A\W(t_j)^\top-\I_k\right] \\
        & \ge 2\tr\left[\sigma_{\min}(\A)\left(\W(t_j)\W(t_j)^\top\right)\inv\M(t_j)\inv\W(t_j)\W(t_j)^\top-\I_k\right] \\
        % & \ge 2\tr\left[\sigma_{\min}(\A)\M(t)\inv-\I_k\right] \\
        & > 0,
    \end{align*}
    which contradicts equation \eqref{eq:contradiction}.
    Therefore, $\liminf_{t\to\infty}\det(\M(t))>0$.

    We are left to show that $\limsup_{t\to\infty}\|\M(t)\|<\infty$. Again, for a proof by contradiction, suppose $\limsup_{t\to\infty}\|\M(t)\|=\infty$. 
    By the previous result, $K=\sup_{t\ge0}\|\M(t)^{-1}\|<\infty$.
    Thus, there exists $t>0$ sufficiently large such that
    \begin{align}\label{eq:dM2dt}
        \frac{\dd\|\M(t)\|^2}{\dd t}>0
    \end{align}
    and
    \begin{align*}
        -\infty<\tr\left[\sigma_{\max}(\A)\M(t) -\M(t)^2\right]+ 4KL(\W(0),\M(0)) e^{-4t}\sigma_{\max}(\A)<0.
    \end{align*}
    By the ODE \eqref{eq:dM} and Lemma \ref{lem:WMconv},
    \begin{align*}
        \frac{\dd\|\M(t)\|^2}{\dd t} &=4\tr\left[\M(t)\inv\W(t)\A\W(t)^\top-\M(t)^2\right]   \\
        &\leq 4\tr\left[\sigma_{\max}(\A)\M(t)\inv\W(t)\W(t)^\top-\M(t)^2\right]  \\
        &\leq 4\tr\left[\sigma_{\max}(\A)\M(t) -\M(t)^2\right]+ 4KL(\W(0),\M(0)) e^{-4t}\sigma_{\max}(\A)\\
        &<0.
    \end{align*}
    This contradict equation \eqref{eq:dM2dt}.
    Therefore, $\limsup_{t\to\infty}\|\M(t)\|<\infty$.
\end{proof}

\begin{proof}[Proof of Lemma \ref{lem:setN}]
    The fact that $\calN$ has Lebesgue measure zero follows immediately because $\calN\subset\{(\W,\M)\in\calD:\det(\W\W^\top)=0\}$, $\det(\cdot)$ is a nonzero polynomial, and the vanishing set of a nonzero polynomial has Lebesgue measure zero \citep{caron2005zero}. 

    Let $(\W(t),\M(t))$ be a solution of the ODE \eqref{eq:dW}--\eqref{eq:dM} with initial condition $(\W_0,\M_0)$.
    In the case $(\W_0,\M_0)\in\calD\setminus\calN$, Lemmas \ref{lem:WMconv} and \ref{lem:detM} imply that $\|\M(t)^{-1}\|$ and $\|\W(t)\|$ are uniformly bounded in $t$.
    On the other hand, suppose $(\W_0,\M_0)\in\calN$, and $(\W(t),\M(t))$ is the solution to \eqref{eq:dW}--\eqref{eq:dM} with initial condition $(\W_0,\M_0)$.  Then there exists $t>0$, $\v\in\R^k$ and $\lambda\in\R$ such that $\W(t)^\top\v=0$ and $\M(t)\v=\lambda\v$.
    Then
    \begin{align*}
        \frac{\dd \v^\top\W(t)}{\dd t}&=2\left(\v^\top\W(t)\M(t)\inv\A-\v^\top\W(t)\right)=0\\
        \frac{\dd \M(t)\v}{\dd t}&=2\left(\M(t)\inv\W(t)\A\W(t)^\top\M(t)\inv\v-\M(t)\v(t)\right)=-2\lambda\v.
    \end{align*}
    It follows that for $t>0$, $\W(t)^\top\v=0$ and $\M(t)\v=\lambda e^{-2t}\v$, $(\W(t),\M(t))\in\calN$. Therefore, for each $t>0$, $\v$ is an eigenvector of $\M(t)$ with eigenvalue $\lambda e^{-2t}$. 
    It follows that $\lim_{t\to\infty}\det(\M(t))=0$. 
\end{proof}

\section{Convergence of neural filters to an eigen-subspace}\label{apdx:gradflow}

In this section, we prove Lemma \ref{lem:globalconv}.

\begin{proof}[Proof of Lemma \ref{lem:globalconv}]
    Suppose $(\W(t),\M(t))$ is a solution to ODE \eqref{eq:dW}--\eqref{eq:dM}.
    By the definition of $V(\W)$ in equation \eqref{eq:V} and the ODE \eqref{eq:dW}, we have
    (suppressing the dependence on $t$)
    \begin{align*}
        \tr\left[\nabla_{\W}(V(\W))\frac{\dd\W^\top}{\dd t}\right]&=2\tr\left[\left(\W-(\W\W^\top)^{-\frac12}\W\A\right)\left(\A\W^\top\M\inv-\W\right)\right] \\
        &=2\tr\left[\left(\W-(\W\W^\top)^{-\frac12}\W\A\right)\left(\A\W^\top(\W\W^\top)^{-\frac12}-\W\right)\right] \\
         &\qquad+\tr\left[\left(\W-(\W\W^\top)^{-\frac12}\W\A\right)\A\W^\top\left(\M\inv-(\W\W^\top)^{-\frac12}\right)\right] \\
        &=-2\|\nabla_{\W}(V(\W^\top))\|^2 \\
         &\qquad+\tr\left[\left(\W-(\W\W^\top)^{-\frac12}\W\A\right)\A\W^\top\left(\M\inv-(\W\W^\top)^{-\frac12}\right)\right].
    \end{align*}
    By Lemma \ref{lem:setN}, $\W(t)-(\W(t)\W(t)^\top)^{-\frac12}\W(t)\A$ is uniformly bounded in $t$.
    By Lemmas \ref{lem:WMconv} and \ref{lem:setN},
    \begin{multline*}
        \limsup_{t\to\infty}\left\|\M(t)\inv-(\W(t)\W(t)^\top)^{-\frac12}\right\|^2 \\\leq\limsup_{t\to\infty}\left\|\M(t)\inv\right\|^2\left\|(\W(t)\W(t)^\top)^\frac12\right\|^2\left\|\M(t)-(\W(t)\W(t)^\top)^{\frac12}\right\|^2=0.
    \end{multline*}
    Therefore, along with LaSalle's invariance principle and Lemma \ref{lem:V}, this implies that $(\W(t),\M(t))$ converges to the set $\calE$ as $t\to\infty$.
    This completes the proof.
\end{proof} 

\section{Stability of the ODE for general timescales}
\label{apdx:tau}

In section \ref{sec:tau} we conjectured that the global convergence result proved in the main text for $\tau=\frac12$ (Theorem \ref{thm:aeconvergence}) extends to all $0<\tau\le\frac12$, and that the dynamics retain the same two-phase structure: (i) rapid convergence towards a $\tau$-dependent invariant manifold $\calO_\tau$, captured by a convex Lyapunov function $L_\tau$, and (ii) slow evolution along $\calO_\tau$ governed by the gradient flow of a $\tau$-dependent nonconvex potential $V_\tau$, whose minima correspond to the principal subspace.
Here we provide some limited empirical evidence in support of our conjecture.

\subsection{Convex Lyapunov function for the case $\tau\to0$}
\label{apdx:lyapunov}

In the $\tau\to0$ regime, the recurrent weights first converge to the invariant manifold $\calO_0=\{(\W,\M)\in\calD:\M^{-1}\W\A\W^\top\M^{-1}=\M\}$ while the feedforward weights remain fixed, after which both weights evolve within the invariant manifold.
Here we sketch out an argument showing exponential convergence in the first phase, where we assume that the feedforward weights $\W$ remain fixed; that is, $\frac{\dd\W}{\dd t}=0$.
Recall the convex Lyapunov function $L_0(\W,\M)=\|\W\A\W^\top-\M^3\|^2$ from section \ref{sec:tau}.
Suppose $\W$ is fixed and $\M(t)$ is a solution of the ODE \eqref{eq:dM}.
By the product rule and the ODE \eqref{eq:dM}, and suppressing the dependence on $t$, we have
\begin{align*}
    \tau\frac{\dd}{\dd t}\M^3&=\M\W\A\W^\top\M^{-1}+\W\A\W^\top+\M^{-1}\W\A\W^\top\M-3\M^3.
\end{align*}
Next, by the chain rule, the previous display and the cyclic property of the trace operator, we have
\begin{align*}
    \tau\frac{\dd }{\dd t}L_0(\W,\M)
    &=-2\tr\left[\left(\W\A\W^\top-\M^3\right)\left(2\M\W\A\W^\top\M^{-1}+\W\A\W^\top-3\M^3\right)\right]\\
    &=-4\left\|\M^{\frac12}\left(\W\A\W^\top-\M^3\right)\M^{-\frac12}\right\|^2-2L_0(\W,\M)\\
    &\le-2L_0(\W,\M).
\end{align*}
Therefore, in the first phase, the Lyapunov function $L_0(\W,\M)$ converges exponentially to zero.

\subsection{Estimation of invariant manifolds in the scalar setting}
\label{apdx:invariant}

\begin{figure}
     \centering
     \includegraphics[width=1.0\linewidth]{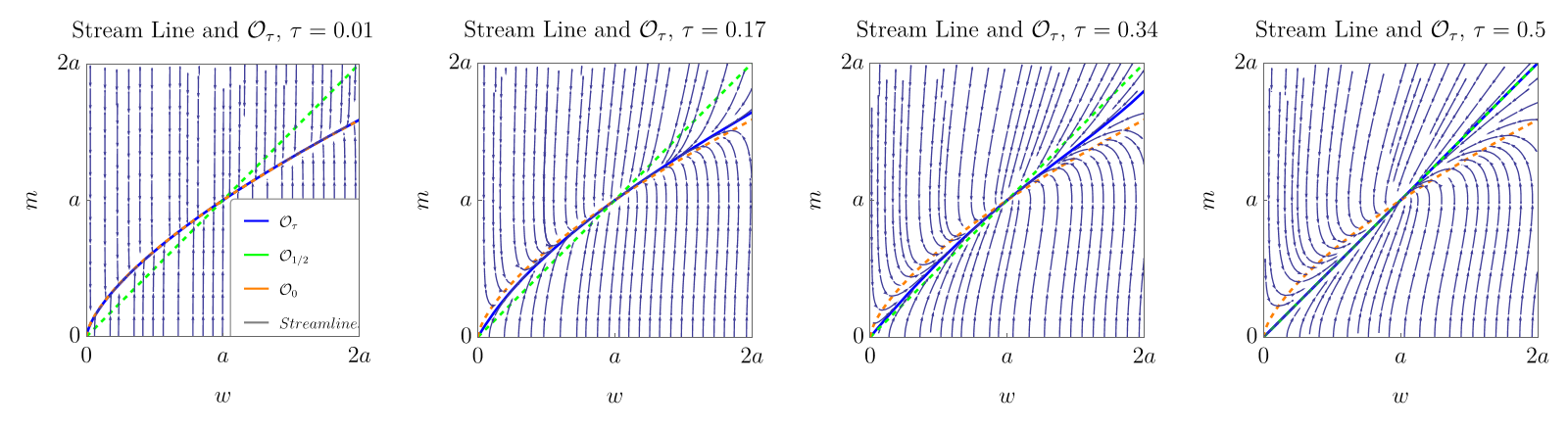}
     \caption{For each $\tau$, the arrows show the streamlines for the ODE \eqref{eq:dW}--\eqref{eq:dM}. The solid blue line is a numerical estimate of the invariant set $\calO_\tau$. The orange and green dashed lines are the invariant sets $\calO_0$ and $\calO_\frac12$, respectively.}
     \label{fig:dk1OtauStreamLine}
\end{figure}

Consider the scalar setting $n=k=1$, in which case the critical points are $\pm(a,a)$.
Figure \ref{fig:dk1OtauStreamLine} shows the streamlines of the ODE \eqref{eq:dW}--\eqref{eq:dM} for $w>0$, computed using Mathematica's built-in function \texttt{Streamline} function, converging for different $\tau\in(0,\frac12]$. 
The invariant manifolds $\calO_0$ and $\calO_\frac12$ are shown in orange and green, respectively. For intermediate $\tau$, we numerically estimate $\calO_\tau$ using the method described in the next paragraph. Note that $\calO_\tau$ appears to interpolate between $\calO_0$ and $\calO_\frac12$ as $\tau$ varies from 0 to $\frac12$, and this is supported by further simulations (not shown).

\paragraph{Numerical estimation of $\calO_\tau$ for $w>0$.} To estimate $\calO_\tau$ when $a<w<2a$, we define a function $h:(0,\infty)^2\to\R^2$ as follows. Given a point $(w_0,m_0)$, let $(w(t),m(t))$ denote the solution of the ODE with initial condition $(w_0,m_0)$ and set $h(w_0,m_0)=\lim_{t\to\infty}(\dot{w}(t),\dot{m}(t))$. A point $(w_0,m_0)$ is in $\calO_\tau$ if $h(w_0,m_0)$ is parallel to the eigenvector of the Jacobian of the vector field $\G$ with smallest associated eigenvalue. We then perform a numerical search to find a point $(w_0,m_0)\in\calO_\tau$ on the line $w=2a$, and then compute the solution of the ODE starting at $(w_0,m_0)$ to estimate $\calO_\tau$ for $a<w<2a$. We implemented this procedure with the help of Mathematica's built-in functions \texttt{NDSolve} and \texttt{FindRoot}. To estimate $\calO_\tau$ for $0<w<a$, we could use a similar argument starting with points $(w_0,m_0)$ near the origin.
However, there is an alternative approach that does not require a numerical search. 
Rather, we assume that the origin is in the closure of the invariant manifold $\calO_\tau$ and the invariant manifold can be linearly approximated when $m$ is small; that is the invariant manifold near the origin is well-approximated by $\{(v_0m,m):m\approx 0\}$ for some $v_0>0$. Under this assumption, when $(w_0,m_0)$ are small, the derivatives of the solutions to the ODE \eqref{eq:dW}--\eqref{eq:dM} at $t=0$ are approximately $\dot w(0)\approx \frac{2aw_0}{m_0}$ and $\dot m(0)\approx \frac{aw_0^2}{\tau m_0^2}$. Using the fact that $w_0\approx v_0m_0$ and $\dot w(0)\approx v_0\dot m(0)$, we see that $2av_0\approx\frac{a}{\tau}v_0^3$ and so $v_0=\sqrt{2\tau}$. Therefore, we estimate $\calO_\tau$ by computing a solution starting at $(\sqrt{2\tau}\alpha,\alpha)$ for $\alpha>0$ small.
This numerical procedure can also be generalized to estimate the invariant manifolds when $d,k>1$; however, visualizing the manifolds is more difficult.

\bibliographystyle{unsrtnat}
\bibliography{stable}

\end{document}